\newcolumntype{T}[1]{>{\ttfamily\arraybackslash}p{#1}}
\definecolor{bad}{cmyk}{0, 0.7808, 0.4429, 0.1412}
\DeclareFontFamily{U}{wncy}{}
\DeclareFontShape{U}{wncy}{m}{n}{<->wncyr10}{}
\DeclareSymbolFont{cyrletters}{U}{wncy}{m}{n}
\DeclareMathSymbol{\Sha}{\mathalpha}{cyrletters}{"58}
\long\def\gloops#1{}
\newtheorem{example}{Example}
\newtheorem{invariant}{Invariant}
\newtheorem{fact}{Fact}
\newtheorem{question}{Question}
\newtheorem{proposition}{Proposition}
\newtheorem{remark}{Remark}
\newtheorem{lemma}{Lemma}
\newtheorem{problem}{Problem}
\newtheorem{conjecture}{Conjecture}
\def\binogauss(#1,#2){\genfrac{[}{]}{0pt}{}{#1}{#2}}
\def\field(#1){{\mathbb F}_{#1}}
\def\cczeq(#1,#2){#1\underset{\text{CCZ}}{\sim}#2}
\def\eaeq(#1,#2){#1\underset{\text{EA}}{\sim}#2}
\def\degmin(#1){\deg_{{\rm min}}(#1)}
\def\degmax(#1){\deg_{{\rm max}}(#1)}
\def\fd(#1){%
  \IfStrEq{#1}{1}
    {\mathbb{F}_2}
    {\mathbb{F}_2^{#1}}%
}
\def\aglgroup(#1){\textsc{agl}(#1)}
\def\glgroup(#1){\textsc{gl}(#1)}
\def\orbit(#1){\textsc{orbit}(#1)}
\def\stable(#1){\textsc{stab}(#1)}
\def\class(#1){\textsc{class}(#1)}
\def\flat(#1){\mathfrak{F}_{#1}}
\def\ffd(#1){{\mathbb F}_{2^{#1}}}
\def\comp(#1){{\rm comp}(#1)}
\def\compancle(#1){{\langle #1 \rangle}}
\def\abs(#1){\vert #1\vert}
\def\card(#1){\vert #1\vert}
\def\tfr(#1,#2){ { {\widehat{#1}(#2)} } }
\def\TFR(#1,#2,#3){ { {\widehat{#1}(#2,#3)} } }
\def\spec(#1){  {\rm spec}{\,(#1)}}  
\def\moment(#1,#2){ { \kappa_{#1}(#2)} }
\def\triv(#1,#2){ { AROUND{\rm triv}_{#2}(#1)} }
\def\invec{{\mathfrak J}'}
\def\boole(#1){ B(#1) }
\def\vst(#1,#2,#3,#4){ (#1,#2)^{#4}_{#3} }
\def\bst(#1,#2,#3){ B_{#1}^{#2}(#3) }
\def\nl(#1){  {\rm nl}{\,(#1)}} 
\def\lin(#1){  {\rm lin}{\,(#1)}}
\def\l(#1){  {\rm l}{\,(#1)}} 
\def\CCZ{\textsc{ccz }}
\def\CCZ-{\textsc{ccz}-}
\def\APN{\textsc{apn} }
\def\EA{\textsc{ea} }
\def\MNBC{\textsc{mnbc} }
\def\EA-{\textsc{ea}-}
\def\ext(#1){{\rm Ext}(#1)}
\def\bleu#1{\textcolor{black}{#1}}
\def\rouge#1{\textcolor{black}{#1}}
\def\invj{\mathfrak j}
\def\invw{\omega}
\def\invnu(#1){{\mathfrak N}(#1)}
\def\invdl(#1){{\mathfrak L}(#1)}
\def\invcomp(#1,#2){{#1}_{{\rm c}}(#2)}
\def\invnum(#1){\mathfrak n(#1)}
\def\invder(#1,#2){{#1}^{{\rm d}}(#2)}
\def\invres(#1,#2){{#1}^{{\rm r}}(#2)}
\def\invmul(#1,#2){X_{#1,#2}}
\def\endomul(#1,#2,#3){X_{#1,#2}^{#3}}
\def\invval{{\mathfrak V}}
\def\invtrans(#1){{\mathfrak T}_{#1}}
\def\compo(#1){\textsc{C}(#1)}
\def\Der(#1,#2){{\rm der}_{#1}(#2)}
\def\restriction#1#2{\mathchoice
              {\setbox1\hbox{${\displaystyle #1}_{\scriptstyle #2}$}
              \restrictionaux{#1}{#2}}
              {\setbox1\hbox{${\textstyle #1}_{\scriptstyle #2}$}
              \restrictionaux{#1}{#2}}
              {\setbox1\hbox{${\scriptstyle #1}_{\scriptscriptstyle #2}$}
              \restrictionaux{#1}{#2}}
              {\setbox1\hbox{${\scriptscriptstyle #1}_{\scriptscriptstyle #2}$}
              \restrictionaux{#1}{#2}}}
\def\restrictionaux#1#2{{#1\,\smash{\vrule height .8\ht1 depth .85\dp1}}_{\,#2}} 
\def\Res(#1,#2){ \restriction{#1}{#2} }
\def\mapping(#1,#2){:\fd(#1)\rightarrow \fd(#2)}
\def\rmc(#1,#2){RM(#1,#2)}
\def\form(#1,#2){H(#1,#2)}
\def\val(#1,#2){V(#1,#2)}
\def\boole(#1){ B(#1) }
\def\agl#1#2{\textsc{agl}(#1,#2)}
\def\stab(#1){\textsc{stab}(#1)}
\def\stablevel(#1,#2){\textsc{stab}_{#1}(#2)}
\def\rmq(#1,#2,#3){\rmc(#1,#2)/\rmc(#3,#2)}
\def\level(#1){\underset{#1}{=}}
\def\modlevel(#1){\underset{#1}{\equiv}}
\def\myref(#1){[\ref{#1}]}
\def\val(#1){{\rm val}(#1)}
\def\agl#1{{\mathfrak #1}}
\def\anf(#1){{\rm anf}(#1)}
\def\fix(#1,#2,#3,#4){{{\rm fix}^{#1,#2}_{#3}}{(#4)}}
\def\classe(#1,#2,#3){{{\mathcal T}(#1,#2,#3)}}
\def\stab(#1){\textsc{stab}(#1)}
\def\stablevel(#1,#2){\textsc{stab}^{#1}(#2)}
\def\orbitlevel(#1,#2){\textsc{orbit}^{#1}(#2)}
\def\stableveldim(#1,#2,#3){\textsc{stab}^{#1}_{#2}(#3)}
\def\stableveldeg(#1,#2,#3,#4){\textsc{stab}^{#1,#2}_{#3}(#4)}
\def\level(#1){\underset{#1}{\sim}}
\def\bound(#1,#2){\underset{#1}{\overset{#2}{\sim}}}
\def\modulo(#1,#2){\mod\rmc(#1,#2)}
\def\pow#1.#2{\tiny$10^{#1.#2}$}
\def\stab(#1){\textsc{stab}(#1)}
\def\stablevel(#1,#2){\textsc{stab}_{#1}(#2)}
\def\stab(#1){\textsc{stab}(#1)}
\def\stablevel(#1,#2){\textsc{stab}^{#1}(#2)}
\def\stableveldim(#1,#2,#3){\textsc{stab}^{#1}_{#2}(#3)}
\def\stableveldeg(#1,#2,#3,#4){\textsc{stab}^{#1,#2}_{#3}(#4)}
\title{Observation of known APNs}
\author{Valérie Gillot}
\author{Philippe Langevin}
\address{Imath, university of Toulon}
\email{\{valerie.gillot,philippe.langevin\}@univ-tln.fr}
\date{\today}
\begin{document}

\maketitle

\begin{abstract}
We present new invariants, \APN-extendibility criterion
and a backtracking approach to identify several numerical facts supporting the conjecture 
that the set of 6-bit \APN functions is limited to 14 \CCZ-classes.
\end{abstract}
    
\section{Introduction}

At the Finite Fields conference FQ9 in Dublin, K.~A.~Browning, J.~F.~Dillon, 
M.~T.~McQuistan, and A.~J.~Wolfe announced the discovery of an \APN permutation 
in dimension six \cite{DILLON}. We refer to this permutation as the 
\emph{Dublin permutation}, or equivalently to its \CCZ-class as the 
\emph{Dublin \CCZ-class}. Since this announcement, numerous attempts have been 
made to construct new \APN permutations in even dimension; however, no further 
examples have been found.

Subsequent work has significantly clarified the landscape of \APN functions 
in dimension six. In particular, the results of \cite{EDEL} provide a list of 
14 distinct \CCZ-classes of 6-bit \APN functions with representatives of algebraic 
degree at most three. Extensive numerical searches in dimension six, the 
classification of cubic \APN functions \cite{LANGEVIN}, dedicated searches for 
\APN permutations \cite{LEANDER}, and more recent investigations on switching 
\cite{LEANDER} strongly suggest that no additional \CCZ-classes of \APN functions 
exist in this dimension.

\begin{conjecture}
\label{ALLKNOWN}
All 6-bit \APN mappings are known.
\end{conjecture}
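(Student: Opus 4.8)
The plan is to turn the conjecture into an exhaustive classification of \CCZ-classes and to make that classification tractable by exploiting equivalence, the \APN-extendibility criterion, and the new invariants. Since the set of all maps $f:\mathbb{F}_2^6\to\mathbb{F}_2^6$ has size $2^{384}$, no direct enumeration is conceivable; every step is designed to search only over orbit representatives and to abandon hopeless partial functions as early as possible. There are two directions to settle: first, that the 14 representatives of \cite{EDEL} are pairwise \CCZ-inequivalent, which is the routine direction and which I would certify by evaluating the new invariants and checking that they already separate the 14 classes; and second, that no further class exists, which is the entire difficulty.

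For the second direction I would first cut down the degrees that must be examined. The classification of cubic \APN functions in \cite{LANGEVIN}, together with the quadratic case, shows that every \CCZ-class admitting a representative of algebraic degree at most three is one of the 14 listed in \cite{EDEL}. Since an \APN function in dimension six has algebraic degree at most five, any hypothetical new class must be \emph{minimal of degree four or five}: none of its members can have degree $\le 3$. This turns the problem into ruling out \CCZ-classes all of whose functions have degree $4$ or $5$, precisely the regime where the extendibility criterion and the Reed--Muller grading are designed to be effective.

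The core is a backtracking search over partial functions. After normalizing $f(0)=0$ and fixing the images of a basis by an element of \textsc{agl}(6,2), I would extend the value table one point at a time; for the remaining levels $r=4$ and $r=5$ I would constrain each coordinate function to lie in the Reed--Muller code \rmc(r,6). At every node the \APN constraint --- that for each nonzero $a$ the derivative $x\mapsto f(x+a)+f(x)$ takes each value exactly $0$ or $2$ times --- forces combinatorial restrictions on the still-free values, and the \APN-extendibility criterion is the decision rule that declares a partial assignment non-completable and prunes its subtree. To avoid expanding several affinely equivalent copies, I would canonicalize each partial function under the stabilizer of the already-fixed part, so that only one representative of each orbit is grown. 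Any function surviving to a leaf is a genuine \APN map; I would then compute its invariants and, whenever they coincide with those of a known class, confirm \CCZ-equivalence explicitly, so that the leaves produce exactly the 14 classes and nothing new.

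The hard part will be controlling the combinatorial explosion so that the recursion actually finishes, and this is where the proof lives or dies. The argument is valid only if the extendibility criterion is \emph{sound}, never pruning a branch that hides a genuine \APN completion, while being strong enough to collapse the tree to a feasible size; I expect the decisive lemma to be a sharp extendibility statement exploiting the bounded degree of the target, so that each extension step is confined to a low-dimensional affine subspace of \rmc(r,6). A secondary but serious difficulty is the orbit computation under \textsc{agl}(6,2): recomputing stabilizers from scratch at every node is prohibitive, so the search must maintain and refine them incrementally. If both the extendibility lemma and the incremental canonicalization can be pushed through, the enumeration is complete, and matching its output against the 14 representatives via the invariants closes Conjecture~\ref{ALLKNOWN}.
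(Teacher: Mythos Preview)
The statement you are trying to prove is explicitly labeled a \emph{conjecture} in the paper, and the paper does not prove it. The abstract says only that the authors ``identify several numerical facts supporting the conjecture,'' and the introduction states plainly that ``Proving Conjecture~\ref{ALLKNOWN} remains extremely challenging due to the astronomical size of the set of mappings from $\fd(6)$ to $\fd(6)$ and the lack of general theoretical tools.'' The paper's contribution is a collection of partial results (closure under 1- and 2-switching, classification of \APN extensions of $(6,3)$-bent functions, analysis of certain two-level types), none of which amounts to a proof of the conjecture. So there is no ``paper's own proof'' to compare against.

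Your proposal, read on its own terms, has two concrete gaps. First, you assert that ``an \APN function in dimension six has algebraic degree at most five'' and use this to restrict to degrees four and five. That bound is itself an open conjecture, stated in the paper as ``The degree of an \APN mapping is less than $m$''; the paper offers only weak evidence (the counting functions of a putative degree-six \APN would all be quintic). Without it, your case split is incomplete. Second, and more fundamentally, your entire argument is conditional on an unstated and unproved lemma: you write that the proof ``lives or dies'' on a ``sharp extendibility statement exploiting the bounded degree of the target,'' but you do not state such a lemma, let alone prove it or give evidence that it prunes enough to make the search terminate. The paper's own extendibility tests (Lemmas~\ref{DELTA} and~\ref{XYZT}) are necessary conditions only, not a complete decision procedure, and even with them the authors could only treat restricted subfamilies. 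A proof plan that hinges on a lemma you hope exists is not a proof plan; it is a restatement of the difficulty.
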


Proving Conjecture~\myref(ALLKNOWN) remains extremely challenging due to the 
astronomical size of the set of mappings from $\fd(6)$ to $\fd(6)$ and the lack 
of general theoretical tools for the study of \APN mappings. Even for Boolean 
functions, there is currently no effective algorithm to determine whether a 
given function can occur as a coordinate of an \APN mapping.

Any potential new example in dimension six would therefore have algebraic 
degree at least four. In this paper, we present numerical results supporting 
Conjecture~\myref(ALLKNOWN). Our approach relies on new invariants for the 
classification of Boolean and vectorial functions, which allow an in-depth 
exploration of \APN extensions of $(6,3)$-bent functions and $(6,4)$-mappings, 
in connection with the notion of switching.

We focus in particular on the possible existence of degree-four \APN functions 
with a specific structural property suggested by the decomposition of the 
14 known \CCZ-classes \cite{CALDERINI}. More precisely, 12 of these 14 classes 
contain at least one \CCZ-class of vectorial functions for which the set of 
fourth-order spectral moments of the components takes exactly two distinct 
values. We present a procedure to classify 6-bit \APN quartic functions sharing 
this regularity. To this end, we introduce a new algorithm to test the existence 
of an \APN extension of a given $(m,m-2)$-function.

\section{Boolean function}
Let $\field(2)$ be the finite field of order $2$. Let $m$ be a positive integer. 
We denote $\boole(m)$ the set of Boolean functions $f \colon \fd(m) \rightarrow\field(2)$. Every Boolean 
function has a unique algebraic reduced representation:
\begin{equation}\label{ANF}
f(x_1, x_2, \ldots, x_m ) = f(x) = \sum_{S\subseteq \{1,2,\ldots, m\}} a_S X_S,
\quad a_S\in\field(2), \ { X_S = \prod_{s\in S} x_s}.
\end{equation}

In this note for $m=6$, we use the letter $a,b,c,d,e,f$ to denote
the variables $x_1$, \ldots $x_6$. The \textsl{degree} of $f$ 
is the maximal cardinality of $S$ with  $a_S=1$ in the algebraic form, and  the minimal cardinality 
of $S$ such that $a_S=1$ is the \textsl{valuation} of $f$. Given integers, $0 \leq s\leq t\leq m$, we use
the  notation $\bst(s,t,m)$ to denote the space of Boolean functions of 
$m$ variables with valuation greater or equal to $s$, and degree less or
equal to $t$. The \textsl{Walsh coefficient} of $f\in \boole(m)$ at $a \in \fd(m)$ is 
$\tfr (f,a) =\sum_{x \in \fd(m)} (-1)^{f(x)+a.x}$, with $.$ the usual scalar product.  We denote
by $\kappa(f)$ the normalized the 4th-order spectral moment of $f$ :
\begin{equation}
\label{KAPPA}
 \kappa(f)= \frac 1q \moment(4, f)=\frac{1}{q^3}\sum_{a\in \fd(m)} {\tfr(f,a)}^4.
\end{equation}

The Walsh 
coefficients satisfy Parseval's identity:
\begin{equation}
\label{PARSEVAL}
 \sum_{a\in \fd(m)} {\tfr(f,a)}^2 = q^2,\quad q= 2^m.
\end{equation}

And giving a minoration  for the \textsl{linearity}  of $f$ :
\(
   \lin(f) := \max_{a\in\fd(m)} \abs({\tfr(f,a)})\geq 2^{m/2}\), 
a Boolean function achieves that bound is called \textsl{bent function}. These functions exist if and only if  $m$ is even, and  satisfy 
$\kappa(f)=1$. The \textsl{auto-correlation} 
of a Boolean function $f$ is defined for $t\in \fd(m)$ by:
\begin{equation}\label{CORR}
f\times f (t)=\sum_{x+y=t} (-1)^{f(x)+f(y)}= \frac 1 q \sum_{a\in \fd(m)} \tfr(f,a)^2 (-1)^{a.t}.
\end{equation}

\def\invnum(#1){\mathfrak n(#1)}
\def\invder(#1){\mathfrak D(#1)}
\def\invres(#1){\mathfrak R(#1)}

One says that the Boolean function $g$ is equivalent to
$f$ if there exists an element  $\agl s\in\aglgroup(m)$  
such that  $g = f\circ \agl s$.  The orbit of $f$ is precisely
the set of elements $g$ that are equivalent to $f$, it is
the orbit of $f$ under the action of $\aglgroup(m)$. Similarly,
we say that $g$ is ea-equivalent to $f$ if there exists an
element $\agl s \in\aglgroup( m )$ such that
$\deg(  g + f\circ A ) \leq 1 $.  For the dimension 6, we know 
\cite{VGPL-project, MAIORANA} there are $150{,}356$  ea-classes 
of Boolean function and a variant of the lift by restriction
of  \cite{BL2003} provides a very effective invariant  
that makes only one collision between the ea-class :

$$ab+bc+cd+bce+abde+abcf+adf\quad \text{and}\quad ac+bc+bd+bce+abde+abcf+adf.$$

The lift by restriction $J'$  of an invariant $J$ on $\boole(m-1)$
is an invariant on $\boole(m)$ defined as follow. For each $u \in\fd(m)$, 
we denote by 
\[
H_u := \{\, x \in \mathbb{F}_2^m \mid u.x= 0 \,\}, 
\qquad 
\overline{H}_u := \{\, x \in \mathbb{F}_2^m \mid  u.x = 1 \,\}
\]
the affine hyperplanes orthogonal to $u$ and its complementary. The 
\emph{restrictions} of $f$ to these hyperplanes says
\(f|_{H_u}\)   and  \(f|_{\overline{H}_u}\)
belong to $\boole(m-1)$. For each $u \in \fd(m)$, we consider 
the multiset 
$R_f( u ) := \{\!\!\{  J(  {\Res(f, H_u )} ), J( {\Res( f, \bar H_u ) } \}\!\!\}$,
and the distribution 
$J'(f) := \{\!\!\{  R_f(u)) | u\in\fd(m)  \}\!\!\}$
is an invariant on $\boole(m)$. In our implementation, 
a multiset $R_f(u)$ correspond dynamically to a number,  
in order to have a numerical function,
and the distribution of the Fourier coefficients, is
what we call a Fourier lift by derivation of $J$. For $m=5$, the number of extended–affine equivalence classes is 
$206$.  By numbering these classes via $\invw(f)$, 
one obtains a Fourier lift by restriction of $\invw$ takes 
$150\,355$ distinct values on $\bst(2,6,6)$. Our
implantation computes the invariant of all member 
in 18 seconds.

\section{vectorial function}

In that note, a vectorial $(m,n)$-function is a mapping from $\fd(m)$ into $\fd(n)$, 
defined by $n$ linearly independent \textsl{coordinate Boolean functions} 
such that $F(x) = \big(f_1(x), f_2(x), \ldots, f_n(x) \big)$. For any $0\ne b\in\fd(n)$,
the Boolean function $x\mapsto F_b(x) = b . F(x)$ is a \textsl{component} of $F$,
and we denote by $\comp(F)$ of the set of non zero components. Two function sharing
the same space of components modulo affine function are said congruent. We says that
that $F'\mapping(m,n)$ is  $k$-connected to $F$  by $G\mapping(m,n-k)$ if 
$$
    \comp(G) = \comp(F')\cap\comp(F), \quad \dim \comp(G)=n-k.
$$

When we observe the Boolean components space $\comp(F)$ of a vectorial function $F$, we are interested on the one hand in the set of their classes $C_F:=\{ \invw( f ) \mid  f \in \comp(F)\}$ and on the other hand in the set of all normalised 4th-order spectral moments $K_F:=\{ \kappa(f) \mid 0 \ne f \in \comp(F)\}$. For these two sets, we also study their cardinality and their distribution of values. Note that $\sharp K_F\leq \sharp C_F$.
We are particularly interested in vectorial functions such that $\kappa(f)$ take one or two values. 
We say that $F$ is homogeneous if $C_F =1$ and bihomogeneous if  $C_F = 2$.
A vectorial function $F$ has $k$ levels of 4-th order spectral moments if the cardinality of $K_F$ is $k$.
  
\begin{example}
If $m$ is odd then all the non-zero components of the power function $x^3$ in $\ffd(m)$ 
are equivalent, it is an homogeneous map, $\sharp C_F=1$, and thus $\sharp K_F=1$. 
\end{example}

\begin{table}
\begin{tabular}{|ccr|ccr|ccr|}
\hline
    degree  &$\kappa$ &class & degree  &$\kappa$ &class & degree  &$\kappa$ &class \\
\hline
2 & 1.0000 &       1 & 5 & 1.6562 &      13 & 6 & 1.6211 &      11 \\
3 & 1.0000 &       3 & 5 & 1.7500 &      17 & 6 & 1.7148 &      10 \\
4 & 1.7500 &       8 & 5 & 1.8438 &      84 & 6 & 1.8086 &      56 \\
4 & 1.9375 &      54 & 5 & 1.9375 &     213 & 6 & 1.9023 &     161 \\
5 & 1.4688 &       8 & 6 & 1.2461 &       4 & 6 & 1.9961 &     444 \\
\hline
\end{tabular}
\caption{\label{NORMALIZED} 15 normalized moment of order 4 smaller that 2.}
\end{table}

We say that $F\mapping(m,n)$ is an extension of $G\mapping(m,n-k)$ 
by $g\mapping(m,k)$ 
if the space $\comp(G)$ is a subspace of codimension $k$ of $\comp(F)$, 
such that $\comp(F) = \comp(G)\oplus \comp(g)$. We say $F$ and $G$ are affine equivalent if there exist an affine $(m,m)$-permutation $A$, 
an affine $(n,n)$-permutation $B$ such that $\deg( G  + B\circ F \circ A ) \leq 1$ ;
For the numerical exploration of 6-bit vectorial
function, we construct a invariant as follows. For a vectorial $(5,6)$-function $F$, 
we define the invariant by composition 
of $F$ as the multiset :

$$\invj( F ) = \{\!\!\{ \invw( f  ) \mid f \in \comp(F) \}\!\!\}$$

is an \EA-invariant. The lift by restriction, say $\invj'$ of this invariant
is a nice invariant that takes 714 distinct values on the set
of known 6-bit APN. 
The collisions concern three functions of degree 2. The Walsh coefficient of a $(m,n)$-Function $F$ at $(a,b) \in \fd(m)\times \fd(n)$ 
is Walsh coefficient of its component $F_b$:
$$\TFR (F,a, b) =\tfr (F_b,a)=\sum_{x \in \fd(m)} (-1)^{F_b(x)+a.x}.$$ 

The degree and linearity are extended 
to  an $(m,n)$-function $F$ :
\[
\deg(F):=\max_{  f \in\comp(F)} \deg(f), \quad \lin(F):=\max_{ 0\ne f \in\comp(F)} \lin(f)
\]
A $(m,n)$-function is a \textsl{bent} if all its non-zero components are bent. A such function exists if and only if $m$ is even and $n\leq m/2$. For $m=2k$ and $n>k$, an $(m,n)$-function 
$F$ is called $(m,n)$-\MNBC function see \cite{SACHAMNBC}, 
if it has the maximum number of bent components $2^n-2^{n-k}$.
We know there are precisely 13 EA-classes of (6,3)-bent function 
in dimension 6.

\section{Almost perfect function}

Let \rouge{$F$ be an} $(m,n)$-function. For $0\ne u\in\fd(m)$, $v\in \fd(n)$, we denote $N_F(u,v)$ the number of solutions in $\fd(m)$ of the equation $F(x+u)+F(x)=v$. Note that if $x$ is a solution then $x+u$ is also a solution. Thus, $N_F(u,v)$ is even. 
\begin{equation}
 N_F(u,v)=\frac 1 {2^{m+n}} \sum_{a\in \fd(m),b\in\fd(n)} \tfr (F_b,a)^2 (-1)^{a.u} (-1)^{b.v} 
 = \frac 1 {2^n} \sum_{b\in\fd(n)} F_b\times F_b (u) (-1)^{b.v}.
\end{equation}
The \textsl{differential uniformity} of a $(m,n)$-function $F$ is  $\Delta_F:= \max\limits_{u\in\fd(m)\setminus \{0\}, v\in \fd(n)} N_F(u,v).$
A $(m,m)$-function $F$ is \textsl{almost perfect non linear} (\APN) 
if and only if
it satisfies one of the following equivalent properties:
\begin{enumerate}[(i)]
\item The differential uniformity of $F$ is $\Delta_F=2$;\\
\item For all 2-flat $\{x,y,z,t\}\subseteq \fd(m)$,  $F(x) + F(y) + F(z) + F(t) \ne 0$;\\
%%\item  $N_4(F)  = 3q^2 - 2q$, where $q=2^m$;
\item $\sum\limits_{0\ne f\in \comp(F)} \kappa(f) = 2(q-1)$,
where $q=2^m$.
\end{enumerate}

Point~(iii) shows that an \APN\ function 
has at least one component whose normalized 
spectral moment is at most 2. 
In dimension~6, this property is satisfied 
by 66 ea-classes of degree less or equal
to 4 (see Table~\ref{NORMALIZED}).

\begin{lemma} If $F$ is \APN in even dimension then \textcolor{black}{$\sharp K_F\geq 2$}.
\end{lemma}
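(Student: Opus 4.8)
The plan is to eliminate the only configuration that would violate the bound, namely $\sharp K_F = 1$. First I would note that if all nonzero components shared a single value $\kappa(f)=\kappa_0$ for every $0\neq f\in\comp(F)$, then characterization (iii) would force $(q-1)\kappa_0 = 2(q-1)$, hence $\kappa_0 = 2$. The entire lemma thus reduces to a purely Boolean statement: \emph{in even dimension no $f$ can satisfy $\kappa(f)=2$ exactly}. This is where the parity of $m$ must enter, so I would aim the argument squarely at that equality.

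To exploit it, I would move from the Walsh side to a count over $2$-flats. Expanding the fourth power in $\kappa(f)=q^{-3}\sum_a \tfr(f,a)^4$ and summing over $a$ collapses the exponential sum to the diagonal $x_1+x_2+x_3+x_4=0$, giving
\[
q^2\,\kappa(f)\;=\;\sum_{x_1+x_2+x_3+x_4=0}(-1)^{f(x_1)+f(x_2)+f(x_3)+f(x_4)}\;=:\;T,
\]
a signed sum indexed by the triples $(x_1,x_2,x_3)$ with $x_4:=x_1+x_2+x_3$. I would then partition these triples into the \emph{degenerate} ones, where two of the four points coincide, and the \emph{genuine} $2$-flats, where the four points are distinct. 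An inclusion--exclusion over the three coincidence patterns $x_1=x_2$, $x_1=x_3$, $x_2=x_3$ shows the degenerate triples number exactly $3q^2-2q$, and on each of them the four values of $f$ cancel in pairs, so every such term contributes $+1$. Each genuine $2$-flat $\Pi=\{x,y,z,t\}$ is covered by exactly $24$ ordered triples, all carrying the same sign $\epsilon(\Pi)=(-1)^{f(x)+f(y)+f(z)+f(t)}$. This yields the integrality identity
\[
T\;=\;(3q^2-2q)\;+\;24\,\Sigma,\qquad \Sigma:=\sum_{\Pi}\epsilon(\Pi)\in\mathbb{Z}.
\]

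Finally I would read this modulo $24$. Imposing $\kappa(f)=2$ means $T=2q^2$, so $24\,\Sigma = 2q^2-(3q^2-2q)=-q(q-2)$, forcing $24\mid q(q-2)$. But for even $m$ we have $q=2^m\equiv 1\pmod 3$, hence $q(q-2)\equiv 1\cdot 2\equiv 2\pmod 3$ and $3\nmid q(q-2)$; in particular $24\nmid q(q-2)$, contradicting $\Sigma\in\mathbb{Z}$. Therefore $\kappa(f)=2$ is impossible in even dimension, the hypothesis $\sharp K_F=1$ collapses, and $\sharp K_F\geq 2$.

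I expect the only delicate step to be the bookkeeping in the second paragraph: getting the count $3q^2-2q$ of degenerate triples right by inclusion--exclusion, checking that each degenerate term is genuinely $+1$, and confirming that each $2$-flat is hit by exactly $24$ ordered triples so that the prefactor $24$ -- and with it the divisibility by $3$ -- is exact rather than merely a bound. Once that identity is secured, the residue computation modulo $3$ is immediate, and it is precisely there that evenness of $m$ is essential: in odd dimension $q-2\equiv 0\pmod 3$, the obstruction vanishes, and indeed the homogeneous map $x^3$ of the Example realizes $\sharp K_F=1$, showing the statement is sharp to the even case.
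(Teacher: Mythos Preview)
Your argument is correct, but it takes a different and considerably longer route than the paper. Both proofs share the first reduction: $\sharp K_F=1$ together with characterization~(iii) forces $\kappa(f)=2$ for every nonzero component, so the task is to rule out $\kappa(f)=2$ in even dimension via a mod-$3$ obstruction. From there the paper stays entirely on the Walsh side: by Fermat's little theorem $\tfr(f,a)^4\equiv\tfr(f,a)^2\pmod 3$, so Parseval gives $q^3\kappa(f)=\sum_a\tfr(f,a)^4\equiv\sum_a\tfr(f,a)^2=q^2\pmod 3$; setting $\kappa(f)=2$ yields $2q^3\equiv q^2\pmod 3$, i.e.\ $q\equiv 2\pmod 3$, which forces $m$ odd. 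You instead pass to the spatial side, expand the fourth moment as a sum over quadruples $x_1+x_2+x_3+x_4=0$, and separate degenerate triples from genuine $2$-flats to obtain the exact identity $q^2\kappa(f)=3q^2-2q+24\Sigma$; the factor $24$ then carries the divisibility by $3$ that clashes with $q(q-2)\equiv 2\pmod 3$ for even $m$. The paper's approach is a two-line application of Fermat plus Parseval; yours is more laborious but yields as a by-product the structural formula linking $\kappa(f)$ to the signed $2$-flat count~$\Sigma$, which ties the spectral moment directly to the flat-based \APN characterization~(ii).
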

\begin{proof}
If $f$ is a non-zero component of $F$ with $\sharp K_F = 1$, then (iv) implies that $\kappa(f)= 2$. \textcolor{black}{By little Fermat's Theorem $q^3\equiv q \mod 3$ and $\tfr(f,a)^4\equiv \tfr(f,a)^2 \mod 3$, applying Parseval's identity, we obtain $\kappa(f)\equiv q \mod 3$, that implies $m$ odd.}
\end{proof}

\textcolor{black}{The non-existence in even dimension of \rouge{\APN functions} with a single spectral moment of order 4 naturally leads us in the next section to look for \APN functions with two spectral moments of order 4. 
We introduce here the terminology of function with 2-spectral levels to designate a vectorial function $F$ such that $\sharp K_F=2$.}

For an \APN vectorial function $F$, we introduce  Boolean functions called \textsl{counting functions}. 
For a given $u\in \fd(m)$ defined for $v\in\fd(m)$
by    
\[
n^F_{u}(v)=\begin{cases}
1, & \text{if } N_F(u,v)=2;\\
0, & \text{if } N_F(u,v)=0.\\
\end{cases}
\]

For all $u\in\fd(m)\setminus\{0\}$, the counting function $n^F_u$ is balanced 
and the walsh coefficients are :

\begin{equation}\label{LINK}
    \forall b\in\fd(m)\setminus\{0\},\quad \tfr (n^F_{u},b)= -F_b\times F_b (u)
    \quad \text{and}\quad \tfr (n^F_{u},0)=0.
\end{equation}

The graph of $F\mapping(m,n)$ is the set 
$\Gamma(F)=\{ (x,F(x) ) \mid x\in\fd(m) \} \subset \fd(m)\times \fd(n)$. One says that two maps $F$, $G$ are \textsl{\CCZ-equivalent} if the
Boolean indicator of the graph are affine equivalent in $\boole(m+n)$,
there exists an affine transformation in $A\in\aglgroup(m+n)$ 
mapping $\Gamma(F)$ onto $\Gamma(G)$. 

The degree of known 6-bit \APN-mappings   takes values 2, 3 or 4. Their
set splits  in 716 \EA-classes which are grouped into 14 \CCZ-classes, 
see \cite{CALDERINI}. In total, 534 are mapping of degree 3. All these
14 classes, have maximal degree 4 and all have minimal degree 2,  except  
the cubic of Edel-Pott \cite{EDEL}. 

\begin{table}
\small
\begin{tabular}{r|T{11cm}|ccc|c}
\centering
\textbf{label} & \textbf{Code 64} &\multicolumn{3}{|c|}{degree}&$\Gamma$-rank\\ \hline
1  & \seqsplit{AAA0AEK2AUIoMcOmASMvUCS5FDB2dfTsAYqG7zfvWa0Mh9NpEOncu0DkXJ8Tx/Un} &1 &66 &18 &1170\\
2  & \seqsplit{AAAkA4YEAgw0WO+CAICuE0eK9VPDv/FxASQmrBjtYq4ulvdzUOG47Zx3xLTNIKyU} &1 &66 &18 &1174\\
3  & \seqsplit{AAAEAYIUAQg0We+yA4yOU0uKtF/Tvf1BACQWbBDdIa4uFP9zE+mYLphHhLjN4KyE} &1 &66 &24 &1170\\
4  & \seqsplit{AAABAoQJAhI82PuyAXDzMDv3z3sb5VGpZVhspNBU1YF9zGTDLQwM30s4UczIuOpK} &1  &9 & 3 & 1166\\
5  & \seqsplit{AAAkAw8oAow8qymaAQSmU06+1NXLLDV5ACQ2zBfJEukqdHB/cOeo7ZFDtXfBgquA} &1 &69 &16 &1172\\
6  & \seqsplit{AAAkAYwMA4o0iC6+AwaOEsuiVdnLzjxFAK4WTBbtcuMatHNDQqysHlVTZbT1s2Wo} &0 &10 &15 &1300\\
7  & \seqsplit{AAAEAIw8Ao4U2W+aAYOSo4WC1FD3rTtRASg2bBLVUuMy5LRncWy8vtx39frN4Sew} &1 &15  &3 &1158\\
8  & \seqsplit{AAAkAQsYAYQsSauCAgim0E6uldXLDrdRACgGzh/JUOka1/pHc+eYbp1jtX/h4SGI} &1 &63 &27 & 1170\\
9  & \seqsplit{AAAkAosgA4g8CSO6AYK28MaOtNHDTbV5AKwejB/5UmES1vJ3EW+IbhNT9XnpgiWw} &1 &15  &3 & 1170\\
10 & \seqsplit{AAAkAw8oAIoEC6WKAgaeEUiWlNXbj7tRA64mLBPh0Gky9/R3UO2IbxFLFXP5Iq+4} &1 & 1  &1&1146\\
11 & \seqsplit{AAAEAwgUAoQ8OW+iAICO0MWq9dvLHX1hAqoGbBTNgiYe1HtbU2+Y7pxnJDz9oSyM} &1  &5  &7& 1166\\
12 & \seqsplit{AAAkAYQsA4o0SyquAIyes8O6td3jT7ZVACY+bBTtUuk6d/97Qa6Un1drpbr9MmeQ} &1 &66 &24 &1168\\
13 & \seqsplit{AAAgAIQ4AEY8COKmAwaKM0GelRnzrX5lAyUG7B/lI+ESxPtzgiuMXdJjNLb942+Q} &1 & 1  &1 &1102\\
14 & \seqsplit{AAAgAoYQAEQ0qGiuAwaKEcG+lRv7LXZlA6cGTBXlgesyZPN7ICOkf9BDNDb1wW+4} &1 &69 &22 &1172\\
\end{tabular}
\caption{\label{K64} Code-64 of representatives of 14 known \CCZ-class of 6-bit 
\APN identified by degree distribution of EA-class and $\Gamma$-rank}
\end{table}
Using the degree distribution of EA-class, and $\Gamma$-rank
one can identify the 14 known classes. However, \CCZ-invariant 
are not legion and it is a intresting question 
to find a discriminant set of numerical invariants.
In  \cite{NIKOLAY},  it is pointed out that
 the combination of  $\Gamma$-rank and   $\Delta$-rank. 
takes 10 values. In next lines, we define 3 news useful
affine invariant on Boolean function that enable us to discriminate  
the graph of \APN-mappings.

\begin{table}[h]
\begin{center}
\begin{tabular}{|l|c|c|c|c|c|}
\hline
     invariant &$\invmul(2,0)$   &  $\invval$   &$\Gamma$-rank  &$\Delta$-rank  &$\invtrans(9)$      \\
     \hline
    number of value  &6                  &5           &9                 &3    &3\\
    time in sec.  &1.24               & 0.21        &56               &42   &2080\\
     \hline
\end{tabular}
\caption{Cumulative computational cost of invariants 
of the 14 known class.}
\end{center}
\end{table}

\begin{invariant}[multiplicative]
For a Boolean function $f\mapping(m,1)$, we denote by
$\endomul(p,q,f)$ the multiplication endomorphism that
maps $g\in\rmc(0,p)$ to the product $gf$ projected onto
$\bst(p,m,m)$. The dimension of the kernel of $\endomul(p,q,f)$
is a numerical invariant, called the multiplicative invariant
and denoted $\invmul(p,q)$.
For example, $\invmul(2,0)$ takes 6 values
on the graphs of the 14 known classes.
\end{invariant}

\begin{invariant}[valuation]
We construct an invariant dynamically by examining the distribution
of the Fourier coefficients of a Boolean function $f$. For each
integer $v \in \{ \infty, 0, 1, \ldots, m \}$, let $\psi_v$ denote the indicator function of the Walsh coefficients
with valuation $v$. Given an invariant $\invj$, we define 
the valuation invariant as the tuple:

$$
    \invval(f) = \big( \invj(\psi_\infty), \invj(\psi_0), \ldots, \invj(\psi_m) \big).
$$
\end{invariant}

\begin{invariant}[transvection-like]
The set of quadratic functions of the form 
$T_{\phi,u}\colon x \mapsto x + \phi(x)\cdot u$, where $q\in\rmc(2,m)$
and $u\in\fd(m)$, is stable under the action of the group $\aglgroup(m)$. 
The number of elements of this form in $\stablevel(r,f)$
defines an invariant, denoted $\invtrans(r)$.
\end{invariant}

\begin{problem}
    The combination of these invariants is discriminative on the 14 known classes, and without  $\invtrans(2)$, we have 1 collision.
    The challenge is to find a faster discriminative solution to bypass the bottleneck 
    when going from 13 to 14 classes.
\end{problem}

\section{Observation of known APN}

The known 6-bit APN functions have algebraic degrees 2, 3, or 4, and they are distributed
into 716 EA-classes and 14 CCZ-classes \cite{CALDERINI}. There are 534 classes of
cubic functions, and each CCZ-class has at least one 
cubic representative and one quartic representative. Every CCZ-class contains a quadratic representative,
except for the Edel-Pott class \cite{EDEL}. 

\begin{question}
    Does there exist an APN function outside the 14 known classes?
\end{question}

\subsection{component of \APN-mapping}

The study for quadratic functions was carried out by \cite{LANGEVIN}, while that for quartic functions still seems out of reach. 
Since the degree is not a CCZ-invariant, the question concerns the existence of an APN function $F$ such that

$$
        4 \leq \degmin(F) := \min_{\cczeq(F,G)} \deg(G) ?
$$

It is quite remarkable that only 37 ea-classes
of Boolean functions appear in the components
of the known \APN functions, 
see  Table \ref{37}, corresponding
to 8 normalized moments.

\begin{fact}
The set of components of APN functions is reduced to 37 elements.
\end{fact}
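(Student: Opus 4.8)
The plan is to establish this by a complete but finite enumeration over the known 6-bit \APN functions, using the \EA-invariant $\invw$ to classify their Boolean components.

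First I would observe that the set $C_F = \{\invw(f) \mid f \in \comp(F)\}$ is an \EA-invariant of the vectorial function $F$. Indeed, if $G$ is \EA-equivalent to $F$, say $G = B\circ F\circ A + L$ with $A\in\aglgroup(6)$, $B$ an invertible affine map and $L$ affine, then every non-zero component satisfies $G_b = (B^{\top}b)\cdot F\circ A$ up to an affine term; hence each component of $G$ is affine-equivalent to a component of $F$ and conversely, so $C_G = C_F$ as sets of \EA-classes. Consequently it suffices to range over one representative of each of the $716$ \EA-classes of known \APN functions. Note that \CCZ-equivalence does \emph{not} preserve $C_F$, so the $14$ \CCZ-representatives alone would not capture every component; the enumeration must be carried out at the finer \EA-level.

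Next, for each such representative $F\mapping(6,6)$ I would extract the $63$ non-zero components $F_b = b\cdot F$ with $b\in\fd(6)\setminus\{0\}$, and compute $\invw(F_b)$ for each of them. Recall that $\invw$ labels the $150{,}356$ \EA-classes of Boolean functions in six variables and is injective apart from the single explicit collision exhibited earlier in the discussion of $\invw$. Collecting the labels produced by all $716$ representatives yields a finite multiset whose underlying set is the object of interest; the assertion is that this set has exactly $37$ elements, and that the induced normalized moments $\kappa$ fall into $8$ values, as displayed in Table~\ref{37}.

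The one genuine subtlety --- and the main point to check --- is the unique collision of $\invw$. Since $\invw$ fails to separate exactly one pair of \EA-classes, I would test whether either member of that pair occurs among the extracted components; if so, the two candidates must be separated by a direct \EA-equivalence computation, or by a finer invariant, to confirm that the count is $37$ rather than $36$. The remaining steps are mechanical: correctness of the chosen representatives, taken from the databases of \cite{CALDERINI, EDEL}, and agreement of the computed $\kappa$-spectrum with Table~\ref{NORMALIZED}. The computation itself is light, involving only $716\times 63$ component evaluations, so the real content of the Fact is the empirical stability of this count across every known class rather than any analytic difficulty.
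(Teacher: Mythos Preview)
Your proposal is correct and matches the paper's approach: the Fact is purely computational, and the paper offers no argument beyond the tabulation in Table~\ref{37}, so enumerating the $716$ \EA-representatives, extracting their $63$ components, and classifying these via the Boolean invariant is exactly what is done. Your observation that \CCZ-equivalence does not preserve $C_F$ and that one must therefore work at the \EA-level, together with the caveat about the single collision of the invariant, are the right points to flag.
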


\def\foo(#1,#2){${#1}_{#2}$}
\begin{table}[h]
\centering
\scriptsize
\begin{tabular}{|c|c|c|c|l|}
\hline
$\kappa$  &$\sharp$   &   &  &\textbf{distribution}\\
\hline
1.00 &4 &4 &0  &\foo(596,1 ) \foo(8230,3 )    \foo(6951,6 ) \foo(1546,9 )\\
1.75 &4 &8 &17 &\foo(2112,12 ) \foo(2214,13 ) \foo(476,14 ) \foo(176,22 ) \foo(36,23 )  \foo(34,27 )\\
2.50 &15 &216  &4454&\foo(968,10 ) \foo(1224,11 ) \foo(12,20 ) \foo(4,21 )\foo(72,24 ) \foo(144,25 ) \foo(20,26 )
\foo(72,28 ) \foo(8,29 ) \foo(8,30 ) \foo(4,32 ) \foo(4,33 ) \foo(6830,4 ) \foo(7014,5 )  \foo(1380,8 )\\
3.25 & 4&191&3261&\foo(152,15 )  \foo(216,16 )  \foo(168,17 )  \foo(8,31 )\\
4.00 & 4&86&1056&\foo(780,0 )  \foo(72,18 )  \foo(52,19 ) \foo(1689,2 ) \foo(1796,7 )\\
4.75 &1&17&258& \foo(12,36 )\\
8.50 &1&3&8 &\foo(24,35 )\\
16.00 & 1&1&0&\foo(4,34 )\\
\hline
\end{tabular}
\caption{\label{37} Repartition of components of known APN.}
\end{table}

\begin{remark}
    Detecting a new \APN function in a massive dataset can be facilitated  
    by using the knowledge of the 37 known components.
\end{remark}

\subsection{counting function}

The counting functions of the known \APN functions 
share 13 affine classes of Boolean functions.  It is observed 
that all of them have degree $\leq 4$,  that there is always 
at least one of degree 1 and none are cubic.

\begin{table}[h]
\centering
\begin{tabular}{|r|r|c|c|c|}
\hline
\textbf{mult.} & \textbf{$\deg$} & \textbf{1} & \textbf{2} &\textbf{4} \\
\hline
13 & 2 & 63  &&\\
3  & 3 & 13  &50  &\\
3  & 3 & 15  &48  &\\
1  & 3 & 21  & 42  &\\
114 & 3 & 3  &60  &\\
219 & 3 & 5  &58  &\\
117 & 3 & 7  &56  &\\
31  & 3 & 9  &54  &\\
2   & 4 & 7  & &56  \\
\hline
\end{tabular}
\caption{Summary of degrees of APN and counting functions.}
\end{table}

\begin{fact}
Known 6 bit \APN functions contain at least one affine counting function.
\end{fact}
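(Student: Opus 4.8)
The plan is to translate affineness of a counting function into a geometric condition on the image of a derivative, then to reduce the verification to the level of \EA-classes and exploit the short list of component classes of Table~\ref{37}. First I would identify $n^F_u$ as an indicator. Since $F$ is \APN, each $N_F(u,v)$ is $0$ or $2$, so $n^F_u=\mathbf 1_{S_u}$ where $S_u:=\{\,v\mid N_F(u,v)=2\,\}=\mathrm{Im}(D_uF)$ is the image of the derivative $D_uF(x)=F(x+u)+F(x)$, a set of cardinality $2^5$ (whence $n^F_u$ is balanced). A balanced Boolean function of degree $\le 1$ is exactly the indicator of an affine hyperplane; hence $n^F_u$ is affine iff $S_u$ is an affine hyperplane, i.e.\ iff there are $0\ne b\in\fd(6)$ and $c\in\field(2)$ with $b\cdot D_uF(x)=c$ for all $x$. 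This says precisely that $u$ is a nonzero linear structure of the component $F_b$. The Walsh side gives the same equivalence: an affine balanced function has a single spectral value $\pm 2^6$, which by \eqref{LINK} must sit at the unique $b$ with $|F_b\times F_b(u)|=2^6$, and Parseval's identity \eqref{PARSEVAL} applied to $n^F_u$ then forces $F_{b'}\times F_{b'}(u)=0$ for every other $b'$. Thus the Fact is equivalent to the assertion that every known \APN function has a nonzero component admitting a nonzero linear structure.

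Next I would dispose of the quadratic classes at no cost: if $\deg F=2$ then every $D_uF$ is affine in $x$, so each $S_u$ is an affine subspace of cardinality $2^5$, hence a hyperplane, and all $63$ counting functions are affine -- this is the first line of the table of counting-function degrees. The remaining reduction is that the multiset $\{\deg n^F_u\mid u\ne 0\}$ is an \EA-invariant: composing $F$ on the right by an affine permutation with linear part $L$ replaces $S_u$ by $S_{Lu}$, composing on the left by a linear permutation maps each $S_u$ bijectively and sends hyperplanes to hyperplanes, and adding an affine term merely translates $S_u$; none of these operations changes the multiset of degrees. It therefore suffices to test one representative in each of the $716$ \EA-classes.

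The main obstacle is that, outside the quadratic case, there is no evident structural reason forcing a hyperplane image to appear: for a cubic $F$ the derivative $D_uF$ is quadratic and its image is a generic variety, which may or may not degenerate onto a hyperplane depending on the class. The argument must therefore lean on the explicit classification, and the genuine difficulty is keeping the search tractable. Here the preceding subsection is decisive: the components of all known \APN functions lie in only $37$ Boolean \EA-classes, so one precomputes once and for all the directions at which each of these classes carries a nonzero linear structure, and then a given $F$ yields an affine counting function as soon as one of its components falls into such a class at a compatible direction. Scanning the $716$ \EA-classes against this short table confirms that the degree-$1$ column is never empty -- its smallest entry being $3$ for the cubic profiles and $7$ for the quartic profile -- which is exactly the statement of the Fact.
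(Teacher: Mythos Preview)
Your argument is mathematically sound: the identification of $n^F_u$ with the indicator of $\mathrm{Im}(D_uF)$, the equivalence between affineness of $n^F_u$ and the existence of a nonzero linear structure for some component $F_b$, the Walsh-side confirmation via \eqref{LINK} and Parseval, and the \EA-invariance of the multiset $\{\deg n^F_u\}$ are all correct and cleanly stated.

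The paper itself, however, gives no proof of this Fact at all: it is stated purely as a computational observation read off the table of counting-function degree profiles immediately preceding it. Your route is therefore genuinely richer than the paper's: you supply a structural reformulation --- ``every known 6-bit \APN has a component with a nonzero linear structure'' --- that the paper does not make explicit, and you point out how the 37-class component table can be leveraged to shortcut the verification. What you gain is conceptual clarity and a reusable criterion; what the paper's bare tabulation buys is brevity, since the Fact is intended only as empirical support for Conjecture~\ref{ALLKNOWN} rather than as a theorem in its own right. Both ultimately rest on the same finite check over the $716$ \EA-classes, so neither avoids computation, but your framing would transfer immediately to other dimensions or to testing a putative new \APN.
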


We apply the relation \ref{LINK}, one can 
see that all the counting functions of 
a putative  6-bit \APN of degree 6 
are quintic.  A fact that provides  supporting evidence 
for the widely accepted~:
\begin{conjecture}
    The degree of an \APN mapping is less than $m$.
\end{conjecture}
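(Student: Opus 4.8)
The statement is the classical conjecture that an \APN function cannot attain full algebraic degree, and the plan is to attack it through the counting functions $n^F_u$ and relation~\ref{LINK}. I would argue by contradiction: assume $F\mapping(m,m)$ is \APN with $\deg F=m$, so that some component $F_b$ carries a nonzero degree-$m$ coefficient. The idea is to read off from relation~\ref{LINK} exactly how large the degrees of the counting functions are forced to be, and then to confront that rigidity with the \APN property.

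First I would record that every $n^F_u$ with $u\neq 0$ is balanced, hence of degree at most $m-1$, and invoke the classical correspondence between the algebraic degree of a Boolean function and the $2$-adic valuation $v_2$ of its Walsh spectrum, namely $\deg(n^F_u)=m+1-\min_{b\neq 0}v_2\big(\tfr(n^F_u,b)\big)$ (the coefficient at $b=0$ vanishes by balancedness). By relation~\ref{LINK} the relevant Walsh coefficients are the autocorrelations $F_b\times F_b(u)$, and writing $F_b\times F_b(u)=2^m-2\,\wt(h)$ with $h$ the derivative $x\mapsto F_b(x)+F_b(x+u)$ reduces everything to the residue of $\wt(h)$ modulo $4$. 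Since $\deg h\le m-1$ the weight is even, which already yields $\deg(n^F_u)\le m-1$ unconditionally; and the degree equals $m-1$ exactly when some derivative satisfies $\wt(h)\equiv 2\pmod 4$.

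Next I would compute the top part of the derivative. The degree-$(m-1)$ homogeneous part of $x\mapsto F_b(x)+F_b(x+u)$ is governed solely by the degree-$m$ coefficient $a^b_{[m]}$ of $F_b$ and equals $a^b_{[m]}\sum_{i}u_i\,X_{\{1,\dots,m\}\setminus\{i\}}$; in particular a full-degree component produces a derivative of degree exactly $m-1$ for every $u\neq 0$. Feeding this into the weight-mod-$4$ analysis of the previous step should show that, under $\deg F=m$, every counting function $n^F_u$ has degree exactly $m-1$ (quintic when $m=6$), which is the structural conclusion already announced in the text. This is where the first genuinely delicate calculation lives: the passage from the top coefficients of $F$ to $\wt(h)\bmod 4$ involves the quadratic cross-terms of the weight formula, not merely a count of top monomials, so it must be carried out with the full second-order weight identity rather than a naive linearisation.

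The hard part, and the reason the statement remains a conjecture, is the final confrontation. Knowing that a hypothetical full-degree \APN would have \emph{all} of its counting functions of degree exactly $m-1$ is a strong rigidity, and in dimension six it already places such a function outside every known class, whose counting functions realise only the degrees $\{1,2,4\}$ and always include an affine one, as recorded in the preceding table. I do not, however, expect to close the argument here: there is no available invariant or counting identity that turns ``all counting functions quintic'' into an outright violation of the \APN conditions~(ii)--(iii). Converting this rigidity into an impossibility is precisely the open core of the problem, so the plan, as in the paper, stops at assembling the degree constraints as strong supporting evidence rather than a complete proof.
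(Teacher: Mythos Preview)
Your reading is exactly right: the paper does not prove the conjecture but only observes, via relation~(\ref{LINK}), that a putative degree-$m$ \APN would force every counting function $n^F_u$ to have degree $m-1$, and your proposal fleshes out precisely this line before correctly stopping at the same open endpoint. The step you flag as delicate is in fact short: writing the derivative as $h_1+h_2$ with $h_1(x)=\prod_i x_i+\prod_i(x_i+u_i)$ the indicator of the pair $\{\mathbf 1,\mathbf 1+u\}$ (so $\wt(h_1)=2$) and $\deg h_2\le m-2$, the symmetry $h_2(x)=h_2(x+u)$ makes both $\wt(h_2)$ and the cross term $2\,\wt(h_1h_2)$ vanish modulo~$4$, so $\wt(h)\equiv 2\pmod 4$ drops out immediately.
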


\subsection{low level mapping}

\begin{proposition}
All 6-bit APN functions of type $(1,4)$ are quadratic.
\end{proposition}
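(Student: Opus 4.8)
The plan is to make the hypothesis explicit, extract from it a completely rigid component distribution, exploit the flat autocorrelation of the bent components through \ref{LINK}, and then reduce the residual freedom to a finite search anchored on the known $(6,3)$-bent functions.

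First I would unfold the definition: reading ``$F$ of type $(1,4)$'' as $K_F=\{1,4\}$, every nonzero component $f\in\comp(F)$ has $\kappa(f)\in\{1,4\}$. Writing $n_1,n_4$ for the numbers of components with $\kappa=1$ and $\kappa=4$, the \APN characterisation (iii) gives $n_1+4n_4=\sum_{0\ne f\in\comp(F)}\kappa(f)=2(q-1)=126$, while $n_1+n_4=q-1=63$. Hence $n_4=21$ and $n_1=42$: exactly $42$ bent components and $21$ components of normalised fourth moment $4$. A one-line Parseval computation (\ref{PARSEVAL}) shows that a $\kappa=4$ component with plateaued spectrum takes the values $0,\pm 16$ on $16$ points, i.e. is a quadratic form of rank $4$; this is the behaviour we must establish for all $21$ of them, together with quadraticity of the $42$ bent components.

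The engine of the argument is \ref{LINK}. Since a bent function has vanishing autocorrelation away from the origin, for every bent direction $b$ and every $u\ne 0$ one has $\tfr(n^F_u,b)=-F_b\times F_b(u)=0$; thus for each $u$ the Walsh support of the balanced counting function $n^F_u$ is contained in the fixed $21$-element set $\mathcal N$ of non-bent directions. I would combine this with the derivative characterisation of quadraticity: $F$ is quadratic exactly when every $D_uF$ is affine, and $D_uF$ affine forces $n^F_u$ to be the indicator of an affine hyperplane, hence affine. The aim is to run this implication the other way on type-$(1,4)$ functions, showing that a balanced Boolean function whose Walsh transform is confined to the small set $\mathcal N$ must itself be affine, and thereby force every $D_uF$ to be affine and $\deg F\le 2$.

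The hard part is twofold. First, $\kappa=4$ alone does not force a component to be quadratic, since non-plateaued spectra with $\kappa=4$ exist, so no purely spectral-moment argument can finish and the interaction of the $21$ non-bent components must be controlled. Second, recovering affineness of $n^F_u$ from the mere confinement of its Walsh support to $\mathcal N$ is not automatic and depends on the geometry of $\mathcal N$. I therefore expect the decisive move to be structural-plus-computational: show that the $42$ bent directions contain a $3$-dimensional subspace, so that $F$ is an \APN extension of one of the $13$ \EA-classes of $(6,3)$-bent functions, and then apply the extension/backtracking procedure of this paper to enumerate every \APN completion and certify that each is quadratic. Establishing the all-bent $3$-subspace combinatorially, or else certifying exhaustiveness of the search over the $13$ classes, is where the genuine difficulty lies; everything preceding it is bookkeeping.
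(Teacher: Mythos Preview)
Your computation $n_1=42$, $n_4=21$ is correct, and it is exactly here that the paper's argument diverges from yours --- and finishes in two lines. You are trying to control the $21$ non-bent components via \ref{LINK} and the geometry of the Walsh support of $n^F_u$, and you correctly flag that this route is obstructed. But the simple observation you are missing is that the $42$ bent components already have rank~$6$: a subspace of $\comp(F)$ of dimension $\le 5$ contains at most $31$ nonzero vectors, so $42>31$ forces the bent components to span all of $\comp(F)$. Since every bent function on $\fd(6)$ has algebraic degree at most $m/2=3$, every component of $F$ --- being an $\ftwo$-linear combination of the spanning bent components --- has degree $\le 3$, whence $\deg(F)\le 3$. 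The paper then simply invokes the classification of cubic 6-bit \APN functions in \cite{LANGEVIN} to conclude. (The paper's text says ``56 bent functions'', apparently a slip from the $(1.75,4)$ row of the table; the argument is identical with the correct count $42$.)

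Your fallback plan --- locate a bent $3$-subspace and run the backtracking over the $13$ classes of $(6,3)$-bent functions --- is not wrong, and indeed the paper carries out exactly that computation later for a different purpose (Section on extension of bent, Table~\ref{APNEXT}). But for this proposition it is vastly heavier than needed: you are hunting for a rank-$3$ bent subspace when the bent components already give you rank~$6$ for free. The entire machinery of counting functions, confinement of Walsh supports to $\mathcal N$, and affineness of $n^F_u$ is bypassed by the one-line rank bound.
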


\begin{proof}
The component space of a 6-bit APN function $F$ of type $(1,4)$
contains 56 bent functions forming a system of rank~6.
Since bent functions have algebraic degree at most~3,
we obtain $\deg(F) \leq 3$, and all \APN cubic functions are 
already classified in \cite{LANGEVIN}.
\end{proof}

The \CCZ-class that contains the well-known Dublin permutation 
splits into 13 distinct classes. A finer analysis of the component 
functions is given in the table below.

\begin{wraptable}{l}{70mm}
\begin{tabular}{|c|ccc|cccc|}
\hline
\multirow{2}{*}{\#} & \multicolumn{3}{c|}{degree} & \multicolumn{4}{c|}{ 4th-spectral moment} \\
\cline{2-8}
&2&3&4&1&1.75&2.5&4.0\\
\hline
1  &63 &   &    & 42[1] &     &    & 21[1]\\
2  &    & 7 & 56 &    & 56[1]   &    & 7[\textcolor{green}{1}]\\
5  &1  & 62 &    & 30[2] &  & 24[2] & 9[3]\\
2  &    & 31 & 32 & 12[2] & 32[3] & 12[2] &7[\textcolor{green}{2}]\\
1  &    & 31 & 32 & 12[1] & 32[2] & 12[2] &7[\textcolor{green}{2}]\\
2  &    & 31 & 32 & 12[2] & 32[2] & 12[2] &7[\textcolor{green}{2}]\\
\hline
\end{tabular}
\end{wraptable}
\small

There are 2-spectral levels APN functions. The \CCZ-class 
of Dublin permutation is divided into 13 \EA-classes \cite{CALDERINI}, 
3 of which have 2 spectral levels, see line 1 and 2. 
This table also gives the distribution of the degrees and spectral levels of the components, specifying the number of \EA-classes. For example, the line 2 there is 2 \EA-classes which one that contains 
the Dublin permutation, 7 components are cubics and 56 are quartics. 

\small
\noindent Moreover,  56 components with  spectral moment 1.75 are in [1] \EA-class,
7 components with  spectral moment 4.0 are in [1]. Does there exist a two-level 
\APN function outside the 14 known classes? We decided to search
for other examples by extension process. A vectorial \APN function $F$ 
is with \textsl{2-spectral levels} if the normalized 4th-order spectral 
moments of its components take 2 distinct values $\alpha$ and $\beta$. 
In this case, \begin{equation}\label{AB}
        \alpha A + \beta B = 2(q-1), \quad A + B = q-1 ;
\end{equation} where $A$ (resp. $B$) is the number of components $f$ of $F$ such that $\kappa(f)=\alpha$ (resp. $\kappa(f)=\beta$). We suppose that $\alpha < \beta$ and we say $F$ is a function of type $(\alpha,\beta)$.
Using the classification of Boolean functions,  among 293 values of $\kappa$, we found 62 possible pairs satisfying \ref{AB}~, involving function of degree less or equal to 4:
$$
\hbox{
\tiny
\begin{tabular}{|r|c|c|c|c|r|c|c|c|c|}
\hline
$\alpha$  &A&$\deg$&$\sharp$& $\beta$ &B&$\deg$&$\sharp$\\
\hline
\rowcolor[gray]{.8} 
\rowcolor{green!20}1.0&42 &23. &{4} &4.0&21 &234 &{86}\\
\rowcolor{bad!20} 1.0&60 &23. &{4} &22.0& 3 &.3. &{1}\\
\rowcolor{bad!20} 1.0&56 &23. &{4} &10.0& 7 &.3. &{1}\\
\rowcolor{bad!20} 1.0&49 &23. &{4} &5.50&14 &.34 &{29}\\
\rowcolor{bad!20} 1.0&21 &23. &{4} &2.50&42 &.34 &{216}\\
\rowcolor{bad!20} 1.0&57 &23. &{4} &11.50& 6 &.34 &{5}\\
\rowcolor{bad!20} 1.0&35 &23. &{4} &3.250&28 &.34 &{191}\\
\rowcolor[gray]{.8}1.0&15 &23. &{4} &2.3125 &48 &.34 &{214}\\
\rowcolor{bad!20} 1.0&51 &23. &{4} &6.250&12 &..4 &{13}\\
\rowcolor{bad!20} 1.0&47 &23. &{4} &4.9375 &16 &..4 &{37}\\
\rowcolor{bad!20} 1.0&39 &23. &{4} &3.6250&24 &..4 &{67}\\
\rowcolor[gray]{.8}1.0& 7 &23. &{4} &2.1250&56 &..4 &{49}\\
\rowcolor{bad!20} 1.0&55 &23. &{4} &8.8750& 8 &..4 &{2}\\
\hline
\end{tabular}
\begin{tabular}{|r|c|c|c|c|r|c|c|c|c|}
\hline
$\alpha$  &A&$\deg$&$\sharp$& $\beta$ &B&$\deg$&$\sharp$\\
\hline
\rowcolor{green!20} 1.750&56 &..4 &{8} &4.0& 7 &234 &{86}\\
1.750&42 &..4 &{8} &2.50&21 &.34 &{216}\\
1.750&60 &..4 &{8} &7.0& 3 &.34 &{3}\\
1.750&51 &..4 &{8} &3.0625 &12 &.34 &{321}\\
1.750&35 &..4 &{8} &2.3125 &28 &.34 &{214}\\
1.750&59 &..4 &{8} &5.6875 & 4 &..4 &{25}\\
1.750&21 &..4 &{8} &2.1250&42 &..4 &{49}\\
1.750&49 &..4 &{8} &2.8750&14 &..4 &{119}\\
1.750&57 &..4 &{8} &4.3750& 6 &..4 &{34}\\
\rowcolor[gray]{.8}1.9375 &56 &..4 &{54} &2.50& 7 &.34 &{216}\\
1.9375 &60 &..4 &{54} &3.250& 3 &.34 &{191}\\
1.9375 &42 &..4 &{54} &2.1250&21 &..4 &{49}\\
\rowcolor[gray]{.8}1.9375 &62 &..4 &{54} &5.8750& 1 &..4 &{19}\\
\hline
\end{tabular}
}
$$

\begin{remark}
Observe that a function of type $(1,\beta)$ with more than 32 bent components cannot exist.
Indeed, the rank of the bent components would necessarily be 6, which would imply that the function is cubic.
However, \APN cubic functions are already completely classified.
\end{remark}

\section{Switching known APNs}

Let us recall a flat  is a 4-set $\{x,y,z,t\}\subset \fd(m) $
such that that $x+y+z+t=0$, and $F\mapping(m,m)$ is \APN 
if and only if $F(x) + F(y)+F(z)+F(t) \ne 0$ for all 2-flat  $\{x,y,z,t\}$.
Given an integer $k\leq m$, and  $G\mapping(m,m-k)$, we define the set 
of 2-flats of $G$:
$$
                \flat(G)  := \big\{   \{x,y,z,t\}  \mid G(x) + G(y) + G(z) + G(t) = 0 \big\}
$$

One see that the  extension $F:=(G,g)$ of $G$  by $g\colon\fd(m)\rightarrow \fd(k)$
mapping $x\mapsto \big(G(x), g(x)\big)$ is \APN if and only if  

\begin{equation*}
        \forall \{x,y,z,t\}  \in\flat(G)  \qquad g(x) + g(y) + g(z) + g(t) \not= 0 .
\end{equation*}

In other word,  denoting by $T=\fd(n) \setminus \{0\}$,  
$g$ is in the preimage of $T^N$, by the linear endomorphism:

$$
            \Sha( g ) := ( \ldots, g(x)+g(y)+g(z)+g(t), \ldots )\in\field(2)^N, \quad N=\card( {\flat(G)} )
$$

The determination of \APN extensions of $G\mapping(m,m-1)$ is a easy task
that can be done by solving a linear system, and becomes harder when $k>1$. 
Given an \APN-function $F$,  one may construct new one by switching and more
generally finding all the maps that are 1-connected to $F$. We say that $F$
is 1-switchable by $G$ if there exist $F'\not\equiv F$ which is 1-connected 
to $F$.

\begin{lemma}
\label{KERNEL}
An \APN mapping $F\mapping(m,m)$ is  1-switchable 
by $G$ if and only if  $\dim K(G) > 2m$.
\end{lemma}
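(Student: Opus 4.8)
The plan is to realise every \APN extension of $G$ as a point of a single affine subspace of $\boole(m)$ and to read off switchability from its dimension. Write $F=(G,g)$ with $g\mapping(m,1)$. By the flat criterion recalled above, an extension $F'=(G,g')$ is \APN exactly when $\Sha(g')$ has all its coordinates equal to $1$; since $\Sha$ is $\field(2)$-linear and $\Sha(g)=\mathbf 1$ already, the set of admissible new coordinates is the coset $g+K(G)$, where $K(G)=\ker\Sha\subseteq\boole(m)$. Thus the whole question reduces to deciding whether this coset contains a $g'$ producing an $F'$ genuinely different from $F$.

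First I would isolate the \emph{trivial} part of the kernel, namely $V:=\comp(G)+\mathcal A$, where $\mathcal A\subset\boole(m)$ is the $(m+1)$-dimensional space of affine functions. Every affine $\ell$ lies in $K(G)$ because on any flat $x+y+z+t=0$ the combination $\ell(x)+\ell(y)+\ell(z)+\ell(t)$ equals the linear part of $\ell$ evaluated at $x+y+z+t$, hence vanishes; and every component $b\cdot G$ lies in $K(G)$ because $G$ vanishes on $\flat(G)$ by definition. The sum is direct: an affine component $b\cdot G$ would satisfy $\kappa(b\cdot G)=q$, and then property~(iii) together with $\kappa\ge 1$ would force all remaining $q-2$ components of $F$ to be bent, contradicting the \MNBC bound $q-2^{m/2}$ on the number of bent components of an $(m,m)$-function for $m\ge 4$. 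Hence $\dim V=(m-1)+(m+1)=2m$, and $V\subseteq K(G)$ gives $\dim K(G)\ge 2m$ in all cases.

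Next I would check that congruence is detected exactly by $V$: for valid extensions, $F'=(G,g')$ is congruent to $F=(G,g)$ if and only if $g'-g\in V$, since the two component spaces agree modulo affine functions precisely when the new coordinates differ by an element of $\comp(G)+\mathcal A$. Granting this, the equivalence falls out. If $\dim K(G)=2m$ then $K(G)=V$, so every admissible $g'\in g+K(G)$ has $g'-g\in V$ and yields a congruent $F'$, and no nontrivial switch exists. Conversely, if $\dim K(G)>2m$, choose $h\in K(G)\setminus V$ and set $g'=g+h$. Linearity gives $\Sha(g')=\Sha(g)=\mathbf 1$, so $F'$ is \APN; moreover $g'\notin V$ (otherwise $\Sha(g)=\Sha(g')+\Sha(h)=0$, impossible), so $g'$ is independent of $\comp(G)$ modulo affine functions and $F'$ is a bona fide $(m,m)$-function. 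Finally $g'-g=h\notin V$ shows $F'\not\equiv F$, and since $g$ and $g'$ differ by $h\notin\comp(G)+\mathcal A$ one checks $\comp(F)\cap\comp(F')=\comp(G)$, so $F'$ is $1$-connected to $F$ and $F$ is $1$-switchable by $G$.

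The routine parts are the kernel memberships and the linear bookkeeping. The step that needs real care is the dimension count $\dim V=2m$, i.e.\ proving that $G$ inherits no affine component from the \APN function $F$; the argument above settles it for $m\ge 4$ through the spectral-moment identity and the bent-component bound, and this is precisely the place where the even-dimension/\APN structure enters. The only other point to pin down rigorously is the characterisation of congruence by membership in $V$, which is where the chosen meaning of $\not\equiv$ (equality of component spaces modulo affine maps) is used.
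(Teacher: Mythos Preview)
The paper leaves the proof environment for this lemma completely empty, so there is nothing to compare your argument against; you are supplying the missing proof rather than competing with one. Your approach is the natural one suggested by the surrounding text: interpret $K(G)$ as the kernel of the linear map $\Sha$, identify the ``trivial'' subspace $V=\langle\comp(G)\rangle+\mathcal A$ of dimension $2m$ inside it, and observe that congruence of extensions is governed precisely by $V$. The two directions then follow by the coset description $g+K(G)$ of admissible last coordinates. This is correct and is exactly the mechanism the authors use implicitly in the subsequent computation (the table of ``dimensions'' records $\dim K(G)-2m$).

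One point to tighten: your justification that $\langle\comp(G)\rangle\cap\mathcal A=\{0\}$ via the \MNBC bound only covers even $m\ge 4$. Since the lemma is stated for arbitrary $m$, it is cleaner to invoke directly the well-known fact that an \APN $(m,m)$-function has no nonzero affine component for $m\ge 3$ (if $F_b$ were affine then $\kappa(F_b)=q$, and combined with $\kappa\ge 1$ for the remaining $q-2$ components this already exceeds $2(q-1)$ once $q\ge 4$, without needing bentness). Your verification that the produced $F'$ is a genuine $(m,m)$-function and that $\comp(F)\cap\comp(F')=\comp(G)$ is careful and correct; the small algebraic slip in the line ``otherwise $\Sha(g)=\Sha(g')+\Sha(h)=0$'' is harmless since the intended contradiction ($g'\in V\Rightarrow\Sha(g')=0$ while $\Sha(g')=\Sha(g)+\Sha(h)=\mathbf 1$) is clear.
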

\begin{proof}
\end{proof}

We now, explain how one can use invariant to prove
that the set of  716  konwn  \APN-mapping is closed 
by 1-connection.  Foreach 
ea-class $F$ of konown APNs, we conctruct all the 63 spaces $S$ of codimenion
1 in $\comp(F)$, considering $G\mapping(6,5)$ with support $S$ 
and we use Lemma  \ref{KERNEL} to retains only the maps
that are not congruent to $F$.

\begin{minipage}{60mm}
\includegraphics[scale=0.2]{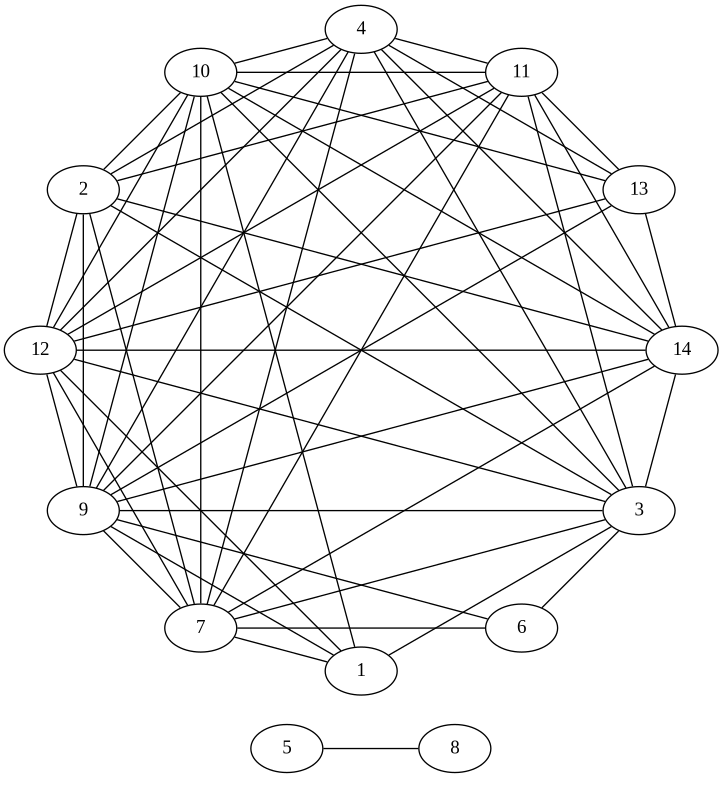}
\end{minipage}
\begin{minipage}{60mm}
\centering
\scriptsize
\setlength{\tabcolsep}{4pt} % espace horizontal entre colonnes
\renewcommand{\arraystretch}{0.9} % espace vertical réduit
\begin{tabular}{|c|c|c|c|c|ccccc|}
\hline
\multirow{2}{*}{\textbf{label}}&
\multirow{2}{*}{\textbf{ea}} &
\multirow{2}{*}{\textbf{sub}} &
\multirow{2}{*}{\textbf{apn}} &
\multirow{2}{*}{\textbf{class}} &
\multicolumn{5}{c|}{\textbf{dimensions} }\\ \cline{6-10}
 & & & & & \textbf{1} & \textbf{2} & \textbf{3} & \textbf{4} & \textbf{5} \\
\hline
1  & 25 & 1575 & 100 & 60  & 12 &   &   & 22 &   \\
2  & 19 & 1197 & 453 & 206 & 355 &   & 26 &   & 4 \\
3  & 91 & 5733 & 1175 & 457 & 644 & 34 & 69 & 44 & 16 \\
4  & 19 & 1197 & 474 & 243 & 35 & 41 & 79 &   & 24 \\
5  & 3  & 189  & 66  & 4   & 66 &   &   &   &   \\
6  & 13 & 819  & 162 & 67  & 162 &   &   &   &   \\
7  & 91 & 5733 & 1319 & 526 & 813 & 28 & 94 & 22 & 16 \\
8  & 3  & 189  & 66  & 5   & 66 &   &   &   &   \\
9  & 92 & 5796 & 1584 & 516 & 813 & 152 & 97 & 44 &   \\
10 & 85 & 5355 & 1413 & 507 & 690 & 121 & 75 & 44 & 16 \\
11 & 85 & 5355 & 1434 & 504 & 758 & 124 & 116 &   & 16 \\
12 & 91 & 5733 & 1427 & 492 & 791 & 93 & 94 & 22 & 16 \\
13 & 13 & 819  & 336 & 163 & 108 & 114 &   &   &   \\
14 & 86 & 5418 & 1589 & 458 & 773 & 183 & 150 &   &   \\
\hline
\end{tabular}
\end{minipage}

The results are summarized in the table above. For each \CCZ-class with a given label, we recall the number of \EA-classes, indicate the number of subfunctions to be extended, the number of APN functions obtained by extension, and finally the number of \EA-classes.
The last five columns record the multiplicities of the dimensions  $K(G)$ encountered during the process.

\begin{fact}
The set of known \APN, is closed under 1-switching. The 1-connection graph  
of \CCZ-class has order 14 and two connected components.
\end{fact}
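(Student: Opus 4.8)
The plan is to turn the statement into a finite, exhaustive verification driven by Lemma~\ref{KERNEL}. First I would fix the $716$ \EA-class representatives of the known \APN functions, grouped into the $14$ \CCZ-classes of Table~\ref{K64}. For each representative $F\mapping(6,6)$ the component space $\comp(F)$ is $6$-dimensional over $\field(2)$, hence has exactly $63$ hyperplanes; each hyperplane is the component space of a $(6,5)$-function $G$ obtained by dropping one independent coordinate. Running over all representatives produces the \textbf{sub} column, namely $63$ times the number of \EA-classes in each \CCZ-class.

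For every subfunction $G\mapping(6,5)$ I would compute the set $\flat(G)$ of its $2$-flats and then the kernel $K(G)$ of the associated linear map $\Sha$, which is just a linear algebra computation over $\field(2)$. By Lemma~\ref{KERNEL}, $F$ is $1$-switchable by $G$ exactly when $\dim K(G) > 2m = 12$, and recording the value of $\dim K(G)$ populates the five \textbf{dimensions} columns. Whenever $G$ is switchable the \APN extensions $F'=(G,g)$ form a single coset $g + K(G)$ — any two such $g$ differ by a function vanishing on every flat of $G$ — so I would enumerate this coset, discard the $g$ that produce an $F'$ congruent to $F$, and reduce the remainder modulo affine equivalence; this yields the \textbf{apn} and \textbf{class} columns. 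The decisive step is to identify the \CCZ-class of each surviving $F'$: I would first sort the candidates by the discriminating invariants introduced above ($\Gamma$-rank, $\Delta$-rank, together with $\invmul(2,0)$, $\invval$, and $\invtrans(r)$), and then confirm $\cczeq(F',F_i)$ against the predicted representative $F_i$ by producing an explicit affine transformation in $\aglgroup(12)$ mapping one graph onto the other. Establishing that every $F'$ is \CCZ-equivalent to one of the $14$ listed functions proves closure under $1$-switching.

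Finally, I would assemble the adjacency data into the $1$-connection graph whose $14$ vertices are the \CCZ-classes and whose edges join two classes as soon as some representative of one is $1$-connected to a representative of the other; a union-find pass then returns the connected components, which the tabulated switches show to be two. The hard part is not the linear algebra but the rigor of the identification step: a coincidence of invariant tuples between $F'$ and a known class is only a necessary condition for equivalence, so each assignment must be backed by a genuine \CCZ-equivalence certificate rather than by invariant matching alone. Secondarily, the enumeration must be organised so that the roughly $716 \times 63$ subfunctions, together with their extension cosets, stay tractable; using the invariants as a pre-filter keeps the number of expensive explicit equivalence tests small, after which the two assertions follow directly from the recorded data.
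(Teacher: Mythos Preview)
Your proposal is correct and follows essentially the same computational route as the paper: enumerate the $63$ codimension-$1$ subspaces of $\comp(F)$ for every one of the $716$ \EA-representatives, invoke Lemma~\ref{KERNEL} to decide switchability, list the \APN extensions as the coset $f_m + K(G)$ modulo the ``trivial'' part (affine functions plus $\comp(G)$, accounting for the baseline $2m$ in $\dim K(G)$), and then classify the survivors. The one place where you are more careful than the paper is the identification step: you insist on backing each \CCZ-assignment with an explicit graph equivalence in $\aglgroup(12)$ rather than trusting invariant coincidence, which is the right standard for a genuine proof even though the paper itself relies on its discriminating invariant tuple.
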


\begin{table}
\centering
\scriptsize
\begin{tabular}{|c|c|c|c|c|c|c|c|c|c|c|c|c|c|c|}
\hline
\multicolumn{15}{|c|}{14 \CCZ-class of known \APN}\\
\hline
label  &1  &2&3&4 &5 &6         &7&8&9&10&11&12&13&14\\
\hline
ea & \textcolor{red}{25} 
    & 19 & 91 & 19 & 3 & \textcolor{blue}{13} & 91 & 3 & 92 & 85 & 85 & 91 & 13 & 86 \\
\hline
switch  &16275
        &12369
        &59241
        &12369 
        &1953 
        &8463        
        &59241 
        &1953
        &59892
        &55335
        &55335
        &59241
        &8463
        &55986\\
class  &5945  
       &2133 
       &8263
       &1699
       &193 
       &1111       
       &8493
       &65
       &8315
       &8188
       &7878
       &8170
       &1584
       &7341\\
\hline
apns    &41626 
        &85271 
        &154303
        &63682
        &6069
        &34076
        &154832
        &2928 
        &156167
        &154780
        &147546
        &161340
        &55048
        &144378\\
\hline
class   &519  
         &700
         &704
         &694 
         &473 
         &701        
         &704
         &353
         &706
         &703
         &704
         &704
         &688
         &706\\
\hline
\end{tabular}
\end{table}

\section{switching by backtraking}

To determine the \APN extensions of $G\mapping(m,m-2)$
by $g\mapping(m,2)$, we use a backtracking approach with 
a carefully chosen initialization. We consider a system of rank $2m-1$ 
composed of a basis of $\comp(F)$ and $m+1$ independent affine functions, 
equipped with a set of "pivots" $\{p_k\}_{k=1}^{2m-1}$ such that $\phi_i(p_j) = \delta{i,j}$.
This allows us to restrict the search to functions $g$ that vanish on the entire set of pivots, 
and if $p$ is not a pivot, we may further assume that $g(p) \in {00,01}$. 
For $m = 6$, one must instantiate $g$ on the remaining 53 vectors. 
The computational effort is substantial and requires an implementation 
based on bit programming, parallelized over a large number of cores.

A vectorial function defines $\binogauss(6,2) = 615$ 
subfunctions of codimension 2, which leads us to run $466{,}116$ 
backtracking procedures. To speed up the process, we applied the invariant $J$ 
to the set of $(6,4)$-subfunctions of a \CCZ-class before computing 
all \APN extensions.

\begin{table}[ht]
\centering
\captionsetup{font=small, labelfont=bf}
\caption{14 CCZ classes of known APN functions}
\scriptsize % texte petit mais lisible
\setlength{\tabcolsep}{4pt} % espace horizontal entre colonnes
\renewcommand{\arraystretch}{0.95} % espace vertical réduit

\begin{tabular}{c|*{14}{c}} % 1 colonne + 14 répétées
\toprule
\multicolumn{15}{c}{\textbf{14 CCZ classes of known APN}} \\
\midrule
\textbf{num} & 1 & 2 & 3 & 4 & 5 & 6 & 7 & 8 & 9 & 10 & 11 & 12 & 13 & 14 \\ 
\midrule
\textbf{ea} &
\textcolor{red}{25} & 19 & 91 & 19 & 3 & \textcolor{blue}{13} & 91 & 3 & 92 & 85 & 85 & 91 & 13 & 86 \\
\midrule
\textbf{sub} &
16275 & 12369 & 59241 & 12369 & 1953 & 8463 & 59241 & 1953 & 59892 & 55335 & 55335 & 59241 & 8463 & 55986 \\
\textbf{filter} &
5945 & 2133 & 8263 & 1699 & 193 & 1111 & 8493 & 65 & 8315 & 8188 & 7878 & 8170 & 1584 & 7341 \\
\midrule
\textbf{apns} &
41626 & 85271 & 154303 & 63682 & 6069 & 34076 & 154832 & 2928 & 156167 & 154780 & 147546 & 161340 & 55048 & 144378 \\
\midrule
\textbf{classe} &
519 & 700 & 704 & 694 & 473 & 701 & 704 & 353 & 706 & 703 & 704 & 704 & 688 & 706 \\
\bottomrule
\end{tabular}
\end{table}

For each function $F$ in one of the 14 \CCZ-classes, 
we generated all codimension-2 subspaces  of the component spaces 
of functions equivalent to $F$. We filtered these using the invariant $\invj$  
and determined all extensions by backtracking. This yields 37 classes at the component level. 
Over the set of $1{,}362{,}046$ \APN functions, 
our invariant takes 714 distinct values.

\begin{fact}
    The set of known \APN is closed by 2-switch. 
\end{fact}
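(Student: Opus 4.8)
The plan is to turn the statement into a finite exhaustive verification and then to certify the outcome with the classifying invariant $\invj'$. Because $2$-connection is expressed entirely through the component space $\comp(F)$, and affine equivalence of component spaces is an \EA-invariant (and a fortiori a \CCZ-invariant notion), it suffices to run the computation on one representative $F$ of each of the $14$ \CCZ-classes and to sweep over the functions \EA-equivalent to it. For a fixed such $F$ I would enumerate every codimension-$2$ subspace $S\subseteq\comp(F)$, take $G\mapping(6,4)$ with $\comp(G)=S$, and search for all $g\mapping(6,2)$ yielding an \APN extension $F'=(G,g)$. By the criterion recalled above, $F'$ is \APN exactly when $\Sha(g)$ has no zero coordinate over $\flat(G)$, i.e. $g(x)+g(y)+g(z)+g(t)\neq 0$ for every $2$-flat of $G$; this recasts each search as a constraint-satisfaction problem over the $64$ values of $g$. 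Unlike the $1$-connection case, where Lemma~\ref{KERNEL} reduces everything to a kernel dimension, for $k=2$ there is no comparably simple linear test, which is why a backtracking search is unavoidable.

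Next I would run the pivot-normalized backtracking of the previous section. Fixing the rank-$(2m-1)$ system built from a basis of $\comp(F)$ together with independent affine functions, and choosing pivots $\{p_k\}$ with $\phi_i(p_j)=\delta_{i,j}$, I may assume $g$ vanishes on all pivots and that $g(p)\in\{00,01\}$ at the remaining distinguished point, so that only $53$ free values of $g$ survive. The constraints coming from $\flat(G)$ are then propagated to prune a branch as soon as a partial assignment forces a forbidden zero sum on some flat. To avoid repeating \EA-equivalent cases, I would first apply the invariant $\invj$ to the family of $(6,4)$-subfunctions and keep a single $G$ per invariant value; this is precisely the filtering that collapses the $615$ subfunctions attached to each function down to the smaller filter counts of the table, and it is what renders the total of $466{,}116$ backtracking runs feasible.

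Finally I would certify membership in the known list. For every \APN function $F'$ emitted by the backtracking I compute $\invj'(F')$. Since $\invj'$ is an \EA-invariant that separates the $14$ known classes with a single residual collision among three quadratics, a new class can only manifest either as a new invariant value or inside that one collision cluster; I therefore first check that no invariant value outside the known $714$ ever appears over all $1{,}362{,}046$ functions produced, and then resolve the finitely many members landing in the ambiguous cluster by a direct affine-equivalence test against the stored representatives. As no new value occurs and every ambiguous case matches a known function, each $F'$ lies in one of the $14$ classes, which is the asserted closure.

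The hard part is not any individual step but guaranteeing \emph{completeness} of the backtracking at this scale: the naive search space per run is $4^{53}$, so the whole argument rests on the constraint propagation never discarding a valid branch and on a bit-level, massively parallel implementation actually exhausting all runs. Any soundness failure in the pruning, or any run left unfinished, would break the closure claim, so I expect the principal effort to lie in proving that the flat-based propagation is both sound and exhaustive and in the engineering required to terminate the computation.
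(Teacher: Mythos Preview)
Your procedure is essentially the paper's: enumerate the codimension-$2$ subspaces of the component space of each known \APN function, filter the resulting $(6,4)$-subfunctions with $\invj$, extend by the pivot-normalized backtracking, and certify the $1{,}362{,}046$ outputs against the $714$ known values of $\invj'$.

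One point, however, is wrong as stated and would invalidate the argument if acted upon. You write that affine equivalence of component spaces is ``a fortiori a \CCZ-invariant notion'' and conclude that a single representative per \CCZ-class suffices. This is false: the component space is an \EA-invariant but \emph{not} a \CCZ-invariant; two \EA-inequivalent functions in the same \CCZ-class typically have non-congruent component spaces, hence different sets of codimension-$2$ subfunctions and different $2$-switch neighbours. The verification must therefore range over all $716$ \EA-class representatives, which is exactly what the paper does and what your own figure $466{,}116 = 716\times 651$ presupposes. So the computation you describe is the right one, but the justification you give for reducing to $14$ representatives is incorrect; taken literally it would leave most of the search space unexamined.
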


\section{two level}

It is alreadly a very hard task to construct  all $(\alpha,\beta)$
\APN map for an admissible pair.  
We decided to restrict our search to two-level vectorial functions 
containing two classes, one cubic and one quaratic. In this case, 
the cubic class forms an isospace, and there are four possible 
configurations, which are described in Table~\ref{PAIR}. 
The two-level Dublin classes of type  $(1.75,4)$
fall into this category.

\begin{table}[htbp]
\small
\caption{\label{PAIR} 4 numerical space compatible.}
\begin{tabular}{|l|c|c|c|l|c|c|c|c|}
\hline
$\alpha$ &$A$ &degree  &$\sharp$classes &$\beta$  &$B$  &degree &$\sharp$classes\\ 
\hline
\hline
 1.0000 &( 7) &23... &{4} & 2.1250 &(56) &..4.. &{49}\\
\rowcolor[gray]{.8}  1.7500 &(56) &..45. &{25} & 4.0000 &( 7) &23... &{6}\\
 1.9375 &(56) &..4.. &{54} & 2.5000 &( 7) &.3... &{5}\\
 1.9375 &(62) &..4.. &{54} & 5.8750 &( 1) &..4.. &{19}\\
 \hline
\end{tabular}
\end{table}

Starting from a cubic function $g$ and a quartic function $h$ 
such that $56 \times \kappa(g) + 7 \times \kappa(h) = 126$, 
we construct the \APN extensions in several steps. In the first step, 
we build a list of compatible cubics $L$, from which we extract a system o
f representatives $R$ under the action of the stabilizer of $h$:

\[
    L := \{ f \in \orbit(g) \mid f + h \in \orbit(h) \}, \qquad R := L \backslash \stable(h).
\]

We then construct the list of all subspaces
\[
    V := \{ \langle f_1, f_2, f_3 \rangle \mid f_1 \in R, \quad f_2 \in L, \quad f_3 \in L \}.
\]

Next, we construct the list of candidate $(6,4)$-functions $(f_1, f_2, f_3, h)$ 
by taking $\langle f_1, f_2, f_3 \rangle$, retaining only those candidates that satisfy 
the extendibility criteria given by Lemma \ref{DELTA} and Lemma \ref{SYSTEM}. If the set of candidates is too large, it is filtered 
using the invariant $J$. Finally, the backtracking procedure is applied to complete the construction.

\begin{tabular}{|c|cccc|cccc|cccc|cccc|}\hline
$g\backslash f$&\multicolumn{4}{c|}{1}&\multicolumn{4}{c|}{2}&\multicolumn{4}{c|}{3}&\multicolumn{4}{c|}{4}\\
\hline
0&5     &448   &0     &-     &13    &236   &0     &-     &4     &16    &0     &-     &13    &152   &64    &-     \\
1&16    &2168  &0     &-     &152   &5652  &136   &0     &77    &1004  &0     &-     &181   &1770  &70    &0     \\
2&0     &0     &0     &-     &43    &0     &0     &-     &50    &512   &0     &-     &76    &63    &0     &-     \\
3&0     &0     &0     &-     &105   &0     &0     &-     &104   &0     &0     &-     &210   &0     &0     &-     \\
4&0     &0     &0     &-     &802   &1660  &0     &-     &958   &7136  &0     &-     &2604  &4685  &0     &-     \\
5&0     &0     &0     &-     &272   &0     &0     &-     &584   &4448  &0     &-     &1856  &551   &0     &-     \\
\hline
$g\backslash f$&\multicolumn{4}{c|}{5}&\multicolumn{4}{c|}{6}&\multicolumn{4}{c|}{7}&\multicolumn{4}{c|}{8}\\
\hline
0&5     &178   &240   &-     &5     &24    &0     &-     &17    &48    &0     &-     &12    &62    &0     &-     \\
1&22    &112   &0     &-     &131 &1416  &0     &-     &420   &1951  &8     &0     &397   &1786  &70    &-     \\
2&9     &7     &0     &-     &37    &0     &0     &-     &436   &488   &10    &0     &824   &803   &40    &\textbf{12}    \\
3&2     &0     &0     &-     &102   &0     &0     &-     &1166  &0     &0     &-     &1822  &0     &0     &-     \\
4&131   &0     &0     &-     &2132  &154   &0     &-     &29290 &2537  &0     &-     &47184 &12699 &0     &-     \\
5&62    &11    &0     &-     &1122  &0     &0     &-     &50988 &11861 &0     &-     &89576 &61126 &57    &-     \\
\hline
\end{tabular}

An extension $G$ of $F$ is obtained by adding some coordinate functions, in
that case $\comp(F)$ becomes a subspace of components space of $G$.

\begin{lemma}\label{DELTA}
If a $(m,n)$-function $F$ has an \APN extension then $\Delta_F \leq 2^{m-n+1}$.
\end{lemma}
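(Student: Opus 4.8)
The plan is to exploit the defining \APN property of the extension, fibred over the values of $F$. Write the extension as $G=(F,g)$ with $g\mapping(m,m-n)$, so that $G\mapping(m,m)$ takes values in $\fd(n)\times\fd(m-n)\cong\fd(m)$ and is \APN by hypothesis. The first step is to record the counting identity linking the two differential spectra. Fix a nonzero $u$ and a value $v\in\fd(n)$. Every solution $x$ of $F(x+u)+F(x)=v$ contributes to exactly one equation $G(x+u)+G(x)=(v,w)$, namely the one with $w=g(x+u)+g(x)\in\fd(m-n)$. Partitioning the solution set of the $F$-equation according to this value of $w$ gives
\[
    N_F(u,v) = \sum_{w\in\fd(m-n)} N_G\big(u,(v,w)\big).
\]

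The second step is simply to bound each summand. Since $G$ is \APN, property~(i) gives $\Delta_G=2$, so $N_G(u,(v,w))\le 2$ for every $u\ne 0$ and every target value $(v,w)$. As the sum ranges over the $2^{m-n}$ possible values of $w$, I would conclude
\[
    N_F(u,v)\ \le\ 2\cdot 2^{m-n}\ =\ 2^{m-n+1}.
\]
Taking the maximum over all $u\ne 0$ and all $v\in\fd(n)$ yields $\Delta_F\le 2^{m-n+1}$, which is the asserted bound.

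There is essentially no obstacle here: the argument is a direct fibering of the differential equation of $G$ over the coordinates of $F$, and the only ingredient beyond bookkeeping is the \APN bound $N_G\le 2$. The single point I would state carefully is the identification $\fd(m)\cong\fd(n)\times\fd(m-n)$ under which $G=(F,g)$ is a genuine $(m,m)$-function, so that the notion of an ``\APN extension'' of the $(m,n)$-function $F$ is well posed; once this is fixed, the counting identity is immediate and the inequality follows by counting the $2^{m-n}$ fibers. One may optionally remark that the bound is tight exactly when, for the worst pair $(u,v)$, all $2^{m-n}$ fibers are simultaneously saturated at value~$2$.
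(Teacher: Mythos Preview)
Your argument is correct. The paper states Lemma~\ref{DELTA} without proof, so there is no approach to compare against; your fibering identity $N_F(u,v)=\sum_{w\in\fd(m-n)} N_G\big(u,(v,w)\big)$ combined with the \APN bound $N_G\le 2$ is the standard and expected justification, and your care in identifying $G=(F,g)$ up to a linear change of output coordinates matches the paper's notion of extension via component spaces.
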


The vectorial $(m,m-2)$-function $F$ has an \APN extension , if and only if,
for all $(x,y,z,t)\in Q_F$ the system \rouge{of} quadratic equations:
\begin{equation}
\label{SYSTEM}
               g(x) + g(y) + g(z) + g(t) \not= 0 \Leftrightarrow  \big(g(x) + g(y) + g(z) + g(t)\big)^3 = 1.
\end{equation}
is solvable in $\fd(4)$. We remark that   $\big(g(x) + g(y) + g(z) + g(t)\big)^3 $ equal to:
\begin{equation*}
          x^3 + y^3 + z^3 + t^3 + xy(x+y) + xz(x+z)+ xt(x+t) + yz(y+z) + yt(y+t)+ zt(z+t).
\end{equation*}
so we can transform system (\ref{SYSTEM}) in an affine system $S_F$ with $N$ equations and $q(q+1)/2$ unknowns,
introducing  $q$ Boolean variables $x^3$, and   $q(q-1)/2$ variables $xy(x+y)$, where $q=2^m$.

\begin{lemma}\label{XYZT}
        If the affine system $S_F$ has no solution then $F$ has no \APN extension.
\end{lemma}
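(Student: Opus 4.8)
The plan is to recognise $S_F$ as the linearisation of the quadratic system~(\ref{SYSTEM}), so that the solvability of $S_F$ becomes a \emph{necessary} condition for an \APN extension to exist; the lemma is then exactly the contrapositive of this necessity.

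First I would recall the equivalence stated just before the lemma: identifying $\fd(2)$ with $\ffd(2)$, the function $F$ admits an \APN extension by some $g\mapping(m,2)$ precisely when, for every flat $(x,y,z,t)\in Q_F$, the value $g(x)+g(y)+g(z)+g(t)$ is nonzero, i.e.\ when $\big(g(x)+g(y)+g(z)+g(t)\big)^3=1$, since the nonzero elements of $\ffd(2)$ are exactly the cube roots of unity. Expanding the cube in characteristic~$2$ yields the ten-term identity already displayed in the excerpt, with summands $g(p)^3$ over the four points of the flat and $g(p)g(p')\big(g(p)+g(p')\big)$ over its six pairs.

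A key verification is that every one of these summands is $\{0,1\}$-valued. For the cubes this is immediate, since $w^3\in\{0,1\}$ for all $w\in\ffd(2)$. For the mixed terms I would run the short case check over $\ffd(2)$: the product $ab(a+b)$ vanishes whenever $a=0$, $b=0$, or $a=b$, and equals $1$ on each of the three unordered pairs of distinct nonzero elements. Consequently every summand lies in the prime subfield $\field(2)$, each per-flat equation is genuinely an $\field(2)$-equation, and we may substitute fresh Boolean unknowns $c_p:=g(p)^3$ and $d_{p,p'}:=g(p)g(p')\big(g(p)+g(p')\big)$. This substitution is exactly the passage from~(\ref{SYSTEM}) to the affine system $S_F$, with $q$ unknowns $c_p$ and $q(q-1)/2$ unknowns $d_{p,p'}$, that is $q(q+1)/2$ unknowns and $N=\card(Q_F)$ equations.

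To finish I would argue by contraposition. If $F$ has an \APN extension realised by some $g$, then $g$ satisfies the quadratic system, and reading off the induced values $c_p=g(p)^3$ and $d_{p,p'}=g(p)g(p')\big(g(p)+g(p')\big)$ produces an assignment satisfying every equation of $S_F$; hence $S_F$ is solvable. Unsolvability of $S_F$ therefore precludes any such $g$, which is the claim. The only genuinely delicate point — and the reason just this one implication holds — is that a solution of $S_F$ need not be \emph{consistent}, i.e.\ need not arise from a single function $g$: linearisation discards the algebraic ties among the $c_p$ and $d_{p,p'}$, so $S_F$ is a relaxation and its solvability is necessary but not sufficient. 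This asymmetry is exactly why the quadratic criterion~(\ref{SYSTEM}) is an equivalence whereas $S_F$ serves only as a rejection test.
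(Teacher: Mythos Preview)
Your argument is correct and follows exactly the route the paper sketches in the paragraph preceding the lemma: the paper introduces the Boolean variables $x^3$ and $xy(x+y)$ and asserts that~(\ref{SYSTEM}) transforms into the affine system $S_F$, then states the lemma without further justification. Your proposal supplies the details the paper omits --- in particular the verification that both $w^3$ and $ab(a+b)$ land in the prime subfield $\field(2)$ for $w,a,b\in\ffd(2)$, which is what makes the substitution by independent Boolean unknowns legitimate and turns the per-flat constraints into genuine $\field(2)$-affine equations --- and then records the obvious contrapositive.
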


We say that an $(m,m-2)$-vectorial function passes the extension test if it satisfies conditions of Lemma \ref{DELTA} and Lemma \ref{XYZT}.  Even it is an hard task, it is possible to use 
the following procedure to "classify" all \APN functions 
of type $(\alpha,\beta)$ that are quartic extensions of 
a $(6,3)$-vectorial cubic. Let $\mathcal E$ be a set of  $(m,n)$-functions. We define
$\ext( \mathcal E )$ as the set of extensions $(F,f)$
having $(\alpha,\beta)$ type that satisfy Lemma \ref{DELTA} and 
"filtered" by invariant $\invec$.  Starting from $\mathcal E_0 :=\{ h\}$
where $\deg(h)=4$ and $\kappa(f)=\alpha$, 
we contruct $\mathcal E_1 = \ext( \mathcal E_0 )$,  $\mathcal E_2 = \ext( \mathcal E_1 )$,
and  $\mathcal E_3 = \ext( \mathcal E_2 )$. We keep the $(6,4)$-function passing
the extension test, and we terminate by a backtracking algorithm to identify
\APN extension, and then 2-level \APN functions.

Applying the procedure using ? 
for the pair (1.75,4), \textcolor{black}{there are 8 quartic classes to initialise the construction process, 4 of which produce \APN functions.} \bleu{The 506880 \APN functions obtained 
after the backtracking phase are not necessarily at 2-spectral levels, but all 16384 functions of type (1.75,4) are ultimately \CCZ-equivalent 
to Dublin permutation.}

\begin{fact} 
    The classes of Dublin classes are uniq among 
    bihomogeneous two level maps of type $(1.75,4)$.
\end{fact}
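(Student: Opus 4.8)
The plan is to read this \textbf{Fact} as an exhaustive classification statement and to prove it by a finite, invariant-guided search whose termination certifies uniqueness. First I would fix the combinatorial skeleton forced by the hypotheses. Setting $m=6$, $q=64$ in relation~\ref{AB} with $\alpha=1.75$ and $\beta=4$ yields $A=56$ and $B=7$, so any \APN map $F$ of type $(1.75,4)$ must have exactly $56$ components of normalized moment $1.75$ and $7$ of moment $4$. Since $\kappa$ is an affine invariant, the bihomogeneity hypothesis $\sharp C_F=2$ together with $\sharp K_F\le\sharp C_F$ and the fact that the two moments are distinct forces the $56$ moment-$1.75$ components into one affine class and the $7$ moment-$4$ components into a single other class. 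Consulting the classification of Boolean functions (the admissible-pair table and Table~\ref{PAIR}) pins the moment-$1.75$ class to be quartic and one of the $8$ candidate classes; these serve as the seeds of the construction.

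Second, I would use the orbit reduction underlying the whole section. Because every type-$(1.75,4)$ function possesses at least one moment-$1.75$ quartic component and the construction is carried out up to affine and \EA-equivalence, it suffices to fix a class representative $h$ for this distinguished component and to enumerate, up to the action of its stabilizer, all completions of $h$ to a $(6,4)$-function and ultimately to an \APN $(6,6)$-mapping. Concretely, for each of the $8$ seeds I would run the iterated extension of the paper: put $\mathcal E_0=\{h\}$ and form $\mathcal E_1=\ext(\mathcal E_0)$, $\mathcal E_2=\ext(\mathcal E_1)$, $\mathcal E_3=\ext(\mathcal E_2)$, at every step adjoining one admissible coordinate while discarding any candidate that violates the differential bound $\Delta_F\le 2^{m-n+1}$ of Lemma~\ref{DELTA} or whose associated affine system $S_F$ of Lemma~\ref{XYZT} is insolvable, and thinning the survivors with the \EA-invariant $\invec$. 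In practice only $4$ of the $8$ seeds survive this filtering to yield any \APN completion at all.

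Third, I would close the enumeration by applying the backtracking algorithm described above to each $(6,4)$-function that passes the extension test, producing the complete list of \APN extensions. Since both filtering criteria are \emph{necessary} conditions for the existence of an \APN extension, no genuine \APN mapping is ever pruned, so the search is lossless and the resulting family is exhaustive. From the $506{,}880$ \APN functions so obtained — not all of which remain at two spectral levels, since adjoining coordinates may create new moment values — I would retain only those that are genuinely of type $(1.75,4)$, and for each of these $16{,}384$ functions compute a discriminating \CCZ-invariant such as the pair $(\Gamma\text{-rank},\Delta\text{-rank})$, refined if necessary by the invariants of Table~\ref{K64}. Finding that every value coincides with that of the Dublin permutation shows that all these functions lie in a single \CCZ-class, namely the Dublin \CCZ-class, which establishes the Fact.

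The \textbf{main obstacle} is computational rather than conceptual: the raw space of completions of a seed is astronomically large, and the argument stands or falls on the pruning being simultaneously aggressive enough to keep the backtracking feasible and provably lossless. The two delicate points are, first, to confirm that $\invec$ is a genuine \EA-invariant so that discarding an entire orbit is legitimate, and second, to engineer the bit-level, massively parallel backtracking so that the half-million output functions can actually be generated and their \CCZ-class identified within reach of the available hardware. Once the search provably terminates with the list above, the conclusion is immediate, since a single shared value of the \CCZ-invariant collapses the $16{,}384$ type-$(1.75,4)$ functions onto the Dublin \CCZ-class.
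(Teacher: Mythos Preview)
Your proposal is correct and follows essentially the same computational strategy as the paper: seed with a quartic $h$ of normalized moment $1.75$ (one of the $8$ candidate classes), extend step by step under the filters of Lemmas~\ref{DELTA} and~\ref{XYZT} and the invariant $\invec$, backtrack from the surviving $(6,4)$-functions, and verify that all $16{,}384$ genuinely two-level outputs land in the Dublin \CCZ-class. The one organizational refinement you do not spell out is that the paper exploits the observation that the seven $\kappa=4$ components form a $3$-dimensional cubic isospace: it fixes a cubic class $g$, builds the compatibility list $L=\{f\in\orbit(g)\mid f+h\in\orbit(h)\}$, reduces modulo $\stable(h)$, and directly enumerates the $3$-spaces $\langle f_1,f_2,f_3\rangle$ before adjoining $h$ to reach the $(6,4)$ stage --- but this is a sharpening of the generic iterated $\ext$ procedure you invoke, not a different route.
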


\section{extension of bent}

By  \cite{PolujanThesis}, we know there are 13 class of $(6,3)$-bent
functions. The invariant $J$ distinguish 9 of the 13 class of bent. 
One obtain a discriminate invariant adding a invariant $K$ defined as
follows. 

For an $(m,n)$-vectorial function $F$, we consider the mapping 
from $\bst(s,t,m)$ to $\vst(m,n,s+2,t)$ that sends 
$f_{\rm c}(k,m)$ to $f_F$. The dimension of the kernel 
of this mapping is an invariant $K$, which takes six values 
on $(6,3)$-bent functions. This invariant can be combined with 
$J$ to obtain a discriminating invariant for bent functions. 
Incidentally, the same invariant takes 715 values over the set of all known \APN functions.

Each support of an \APN function contains $1{,}395$ three-dimensional subspaces.
When considering all 3-dimensional subspaces of the known \APN functions, 
we observe that $3{,}455$ of these subspaces are bent spaces.

\begin{table}[htbp]
\caption{\label{BENT}number of bent 3-space in the component of an \APN }
\begin{tabular}{|l|ccccccccccc|}
\hline
     $\sharp$-bent  &0   &8 &16 &21 &24 &44 &48 &60 &74 &75 &140\\
     \hline
     mult. &504 &99 &78 &1 &21 &2 &2 &2 &4 &2 &1 \\
     \hline
\end{tabular}
\end{table}

A more detailed examination shows that these subspaces rely on five bent classes. 
Consequently, 8 of the 14 $(6,3)$-bent classes do not appear in the supports 
of the known \APN functions. It is therefore important to clarify this situation.
The backtracking algorithm we have described is sufficiently efficient 
to compute all \APN extensions of the 13 bent classes. As a first step, we compute all $3 \times 8$ echelon matrices; there are
\[
\binogauss(8,3) + \binogauss(8,3)^2 + 1 = 108{,}206
\]
such matrices. To decide whether a \APN-extension of a $(6,3)$-bent 
function exists, one can independently run $108{,}206$ instances of 
the backtracking procedure. The computation confirm that 8 class of 
$(6,3)$-bent functions are not \APN-expandable.  The remaining 5 classes 
yield 150 EA-classes (without new \CCZ-class), 
which are distributed as follows:

\begin{table}[h!]
\caption{\label{APNEXT}APN extension of bent class : 150 class}
\begin{tabular}{|l|ccccc|}
\hline
     code  &1 &2 &3 &5 &6\\
     \hline
    $\sharp$-soluce  &13441  & 58426   &61720   &103  &776 \\
    $\sharp$-class &3      &    12   & 13     &54   &111 \\
     \hline
\end{tabular}
\end{table}

\begin{fact}
        All the \APN extension 
        of (6,3)-bent functions 
        are known. 
\end{fact}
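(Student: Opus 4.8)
The plan is to reduce the claim to a single finite, exhaustive verification: show that every $(6,6)$ \APN function whose component space contains $\comp(B)$ for some $(6,3)$-bent function $B$ is \CCZ-equivalent to one of the 14 classes of Table~\ref{K64}. Since \EA-equivalence (hence \CCZ-equivalence) carries component spaces to component spaces, it is enough to fix one representative $B$ in each of the 13 \EA-classes of $(6,3)$-bent functions (the classification of \cite{PolujanThesis}), to enumerate \emph{all} \APN extensions of $B$, to evaluate the invariant $\invec$ (optionally combined with the invariant $K$ used to discriminate bent functions), and to check that no resulting value falls outside the set of invariant values realized by the known \APN functions.

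First I would set up the enumeration. An \APN extension $F$ of $B$ is a $(6,6)$-function with $\comp(B)$ a codimension-three subspace of $\comp(F)$, so the extension datum is the family of three added components. I would parametrize these families by the $3\times 8$ reduced echelon matrices — equivalently by the subspaces of dimension at most three of the relevant eight-dimensional ambient space, of which there are exactly $\binogauss(8,0)+\binogauss(8,1)+\binogauss(8,2)+\binogauss(8,3)=108{,}206$ — and run one independent backtracking instance per matrix, the last two coordinates being completed by the $(m,m-2)\to(m,m)$ backtracking of the previous section. Crucially, before launching each instance I would prune with the two \emph{necessary} extendibility criteria: Lemma~\ref{DELTA}, which requires $\Delta_G \le 2^{m-n+1}$ for the intermediate function $G$, and Lemma~\ref{XYZT}, which discards $G$ whenever the associated affine system $S_G$ is unsolvable. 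Because both conditions are necessary, this pruning is sound: it removes only non-extensible candidates and never deletes a genuine \APN extension.

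With the enumeration in hand, the computation splits the 13 bent classes into two groups. For 8 of them the criteria (or the exhausted backtracking) certify that no \APN extension exists — consistent with the observation from Table~\ref{BENT} that among the $3{,}455$ bent three-spaces occurring in the supports of known \APN functions only five bent classes appear, the other eight being absent. The remaining 5 classes produce exactly the 150 \EA-classes recorded in Table~\ref{APNEXT}; evaluating $\invec$ on each of these and matching against the known values then shows that every one of them already lies in the 716 known \EA-classes, hence inside the 14 known \CCZ-classes, which is the Fact.

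The main obstacle is computational rather than conceptual: the naive search over completions and over the instantiation of the coordinate functions is astronomically large, so the argument stands or falls on making the backtracking feasible. The decisive ingredients are the pivot-based initialization that fixes $g$ on a spanning set and restricts the free values, the aggressive pre-filtering by Lemmas~\ref{DELTA} and~\ref{XYZT}, the invariant $\invj$ used to collapse \EA-redundant subfunctions before extension, and the parallelization over the $108{,}206$ independent instances. The subtler point to control is \emph{completeness}: since the Fact is an exhaustiveness statement, one must argue that the backtracking enumerates \emph{all} \APN completions of each retained $G$, not merely some; here the fact that the filtering lemmas give only necessary — possibly non-sufficient — conditions is precisely what guarantees that no valid extension is lost during the pruning phase, so that the eight "non-expandable" verdicts and the 150 surviving \EA-classes are genuinely exhaustive.
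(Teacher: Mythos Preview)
Your proposal is correct and follows essentially the same computational route as the paper: reduce to the 13 \EA-representatives of $(6,3)$-bent functions, split the search into the $108{,}206$ independent backtracking instances indexed by $3\times 8$ echelon matrices, and record that 8 classes are non-extensible while the remaining 5 yield exactly the 150 \EA-classes of Table~\ref{APNEXT}, all already among the known \APN functions. Your explicit invocation of Lemmas~\ref{DELTA} and~\ref{XYZT} as a pre-filter and your completeness discussion are sound elaborations that the paper leaves implicit in this section; incidentally, your Gaussian-binomial count $\sum_{k\le 3}\binogauss(8,k)$ is the correct formula for $108{,}206$, the displayed expression in the paper being a typographical slip.
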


\section{Conclusion}

Our study represents an important step in understanding 6-bit APN functions, providing strong support for the completeness of the 14 known CCZ-classes. However, during our work, we realized that many possibilities still remain to be explored to fully resolve the 
case of two level situation.

\nocite{*}
\bibliographystyle{plain} 
\bibliography{switching.bib}

@misc{NIKOLAY,
      author = {Nikolay Kaleyski},
      title = {Invariants for {EA}- and {CCZ}-equivalence of {APN} and {AB} functions},
      howpublished = {Cryptology {ePrint} Archive, Paper 2021/300},
      year = {2021},
      url = {https://eprint.iacr.org/2021/300}
}

@article { BFA2022,
author = {   Gillot, Valérie and Langevin, Philippe   },
title  = {   Classification of some cosets of Reed-Muller codes },
journal = {  Cryptography and Communications  },
note   = {  doi:10.1007/s12095-023-00652-4  },
pages = { 1129-1137 },
volume  = { 15 },
year = {2023},
rang = {A},
url  =  {http://langevin.univ-tln.fr/project/agl7/BFA-paper2022.pdf}
}

@misc{LEO,
author = {   Baudrin, Jules and  Galissant,  Pierre and Perrin,  L\'eo   },
title  = {   x },
journal = {  BFA-2025  },
note   = {  doi },
pages = { pages },
volume  = { vol},
year = {2025}
}

@misc{VGPL-project,
  author={Gillot, Valérie and Langevin, Philippe},
  title ={Classification of $\boole(s,t,m)$},
  note = "\url{http://langevin.univ-tln.fr/data/bst/}",
  year=2022
}

@article{CALDERINI,
 author = {Calderini, Marco},
 title = {On the {EA}-classes of known {APN} functions in small dimensions},
 fjournal = {Cryptography and Communications},
 journal = {Cryptogr. Commun.},
 issn = {1936-2447},
 volume = {12},
 number = {5},
 pages = {821--840},
 year = {2020},
 language = {English},
 doi = {10.1007/s12095-020-00427-1},
 keywords = {94A60,94D10,06E30,14G50,11T71},
 zbMATH = {7371090},
 Zbl = {1468.94394}
}

@article {MAIORANA,
    AUTHOR = {Maiorana, James A.},
     TITLE = {A classification of the cosets of the {R}eed-{M}uller code {$R(1,6)$}},
   JOURNAL = {Math. Comp.},
  FJOURNAL = {Mathematics of Computation},
    VOLUME = {57},
      YEAR = {1991},
    NUMBER = {195},
     PAGES = {403--414},
      ISSN = {0025-5718},
   MRCLASS = {94B05 (94-04 94B35)},
  MRNUMBER = {1079027},
MRREVIEWER = {E. F. Assmus, Jr.},
       DOI = {10.2307/2938682},
       URL = {https://doi.org/10.2307/2938682},
}

@article{  SACHAMNBC,
 author = {Bapi{\'c}, Amar and Pasalic, Enes and Polujan, Alexandr and Pott, Alexander},
 title = {Vectorial {Boolean} functions with the maximum number of bent components beyond the {Nyberg}'s bound},
 fjournal = {Designs, Codes and Cryptography},
 journal = {Des. Codes Cryptography},
 issn = {0925-1022},
 volume = {92},
 number = {3},
 pages = {531--552},
 year = {2024},
 language = {English},
 doi = {10.1007/s10623-022-01180-7},
 keywords = {94D10,06E30,14G50,94C30},
 zbMATH = {7827064},
 Zbl = {1542.94287}
}

@incollection {DILLON,
    AUTHOR = {Browning, K. A. and Dillon, J. F. and McQuistan, M. T. and
              Wolfe, A. J.},
     TITLE = {An {APN} permutation in dimension six},
 BOOKTITLE = {Finite fields: theory and applications},
    SERIES = {Contemp. Math.},
    VOLUME = {518},
     PAGES = {33--42},
 PUBLISHER = {Amer. Math. Soc., Providence, RI},
      YEAR = {2010},
      ISBN = {978-0-8218-4786-2},
   MRCLASS = {94A60 (05B10 51E22 94B05)},
  MRNUMBER = {2648537},
MRREVIEWER = {Alexander\ Pott},
       DOI = {10.1090/conm/518/10194},
       URL = {https://doi.org/10.1090/conm/518/10194},
}

@article { LEANDER,
    AUTHOR = {Beierle, Christof and Brinkmann, Marcus and Leander, Gregor},
     TITLE = {Linearly self-equivalent {APN} permutations in small
              dimension},
   JOURNAL = {IEEE Trans. Inform. Theory},
  FJOURNAL = {Institute of Electrical and Electronics Engineers.
              Transactions on Information Theory},
    VOLUME = {67},
      YEAR = {2021},
    NUMBER = {7},
     PAGES = {4863--4875},
      ISSN = {0018-9448,1557-9654},
   MRCLASS = {94A60},
  MRNUMBER = {4306302},
MRREVIEWER = {Wilfried\ Meidl},
       DOI = {10.1109/TIT.2021.3071533},
       URL = {https://doi.org/10.1109/TIT.2021.3071533},
}

@misc{ LANGEVIN,
author="Langevin, Philippe and Saygi, Zülfükar and Saygi, Efim",
title="Classification of APN cubics in dimension 6 over GF(2)",
note="\url{http://langevin.univ-tln.fr/project/apn-6/apn-6.html}",
year=2011
}

@article {EDEL,
    AUTHOR = {Edel, Yves and Pott, Alexander},
     TITLE = {A new almost perfect nonlinear function which is not
              quadratic},
   JOURNAL = {Adv. Math. Commun.},
  FJOURNAL = {Advances in Mathematics of Communications},
    VOLUME = {3},
      YEAR = {2009},
    NUMBER = {1},
     PAGES = {59--81},
      ISSN = {1930-5346,1930-5338},
   MRCLASS = {11T71 (94A60)},
  MRNUMBER = {2476525},
MRREVIEWER = {Alexander\ Barg},
       DOI = {10.3934/amc.2009.3.59},
       URL = {https://doi.org/10.3934/amc.2009.3.59},
}

@conference { BL2003,
author = {   Brier, Eric and  Langevin, Philippe  },
title  = {   The classification of boolean cubics of nine variables.  },
booktitle = {  2003 IEEE Information Theory Workshop  },
address = {  "La Sorbonne", Paris, France  },
year = {2003},
}

@phdthesis{PolujanThesis,
  title        = {An Analysis of Example},
  author       = {Poujan, Alexander},
  year         = 2021,
  note         = {Available at \url{http://dx.doi.org/10.25673/37956}},
  school       = {Otto-von-Guericke-Universität Magdeburg}
}

\end{document}